\theoremstyle{plain}
\newtheorem{theorem}{Theorem}
\theoremstyle{definition}
\newtheorem{definition}{Definition}
\theoremstyle{proposition}
\theoremstyle{lemma}
\newtheorem{lemma}{Lemma}
\theoremstyle{remark}
\begin{document}
\setlength{\oddsidemargin}{0cm}
\setlength{\baselineskip}{7mm}

\begin{titlepage}

~~\\

\vspace*{0cm}
    \begin{Large}
       \begin{center}
         {Moduli Space in Homological Mirror Symmetry}
       \end{center}
    \end{Large}
\vspace{1cm}

\begin{center}
           Matsuo S{\sc ato}\footnote
           {
e-mail address : msato@hirosaki-u.ac.jp}\\
      \vspace{1cm}
       
         {\it Department of Natural Science, Faculty of Education, Hirosaki University\\ 
 Bunkyo-cho 1, Hirosaki, Aomori 036-8560, Japan}

\end{center}

\hspace{5cm}

\begin{abstract}
\noindent
We prove that the moduli space of the pseudo holomorphic curves in the A-model on a symplectic torus is homeomorphic to a moduli space of Feynman diagrams in the configuration space of the morphisms in the B-model on the corresponding elliptic curve. These moduli spaces determine the $A_{\infty}$ structure of the both models.

\end{abstract}

\vfill
\end{titlepage}
\vfil\eject

\setcounter{footnote}{0}

\section{Introduction}
\setcounter{equation}{0}

In the one-dimensional homological mirror symmetry (HMS) \cite{HMS} , the A-model on a symplectic torus corresponds to the B-model on an elliptic curve \cite{PolishchukZaslow}. The objects, representing D-branes, are the Lagrangian submanifolds in the A-model and the complexes of the coherent sheaves in the B-model. In one dimension, any real one-dimensional submanifold is the Lagrangian and the line bundles are coherent sheaves. The morphisms, representing the open strings between the D-branes, are described by Abelian groups whose basis are given by intersecting points of the real one-dimensional submanifolds  in the A-model, and the maps between the complexes of the line bundles in the B-model. From these objects and morphisms, we can canonically construct the Fukaya category in the A-model and the derived category of coherent sheaves in the B-model. It is proved in \cite{FukayaOh, KontsevichSoibelman, Abouzaid, phD, Progress, TropicalBook, thetafunction, BraneMirror} that the Fukaya category is equivalent as an $A_{\infty}$-category to the differential graded (DG) category\footnote{The DG-category is an $A_{\infty}$-category where $m_d$ $(d \ge 3)$ are trivial.} canonically extended from the derived category of coherent sheaves. Furthermore, the Fukaya category is also equivalent as an $A_{\infty}$-category to a non-trivial $A_{\infty}$-category extended from the DG-category by using homotopy operators \cite{CSstring, Merkulov,PolishchukMassay, Polishchuk3pt, PolishchukA-Structures, AspinwallKatz, Zaslow, Kobayashi, Kanazawa}. In this extension, $m_3$ is explicitly constructed in the B-model \cite{Polishchuk3pt}, whereas $m_d$ $(d \ge 4)$ have not been explicitly constructed yet. 

In this paper, we extend the DG-category in a different way, based on the topological string amplitudes. $m_d$ are newly defined and explicitly constructed in the B-model. The $A_{\infty}$-category that consists of these $m_d$ is shown to be equivalent to the Fukaya category\footnote{Thus, the $A_{\infty}$-category defined in this paper is equivalent as an $A_{\infty}$-category to the $A_{\infty}$-category defined by the homotopy operators because they are equivalent to the Fukaya category as $A_{\infty}$ categories.}. In this construction, we find  a moduli space necessary to define $m_d$ that satisfy the $A_{\infty}$ relations in the B-model. This moduli space is homeomorphic to the moduli space of the pseudo holomorphic curves in the A-model.

\vspace{1cm}

\section{Topological String Amplitudes}
\setcounter{equation}{0}
The one-dimensional complex manifold in the B-model is an elliptic curve $E_{\tau}$ which is spanned by $1$ and $\tau \in \bold{C}$. There are two ways to generalize the derived category of coherent sheaves on $E_{\tau}$ to a differential graded (DG) category. One way is based on $\breve{C}$ech cohomology \cite{FukayaOh, KontsevichSoibelman, Abouzaid, phD, Progress, TropicalBook, thetafunction, BraneMirror} and the other is based on Dolbeault cohomology \cite{CSstring, PolishchukZaslow, Merkulov,PolishchukMassay, Polishchuk3pt, PolishchukA-Structures, AspinwallKatz, Zaslow}. They are equivalent by $\breve{C}$ech-Dolbeault isomorphism. We adopt Dolbeault cohomology. Coherent sheaves on $E_{\tau}$ are classified and constructed based on line bundles on $E_{\tau}$ \cite{Atiyah}. We study only the line bundles. The extension to general coherent sheaves is straight forward and the way is written in \cite{PolishchukZaslow, Polishchuk3pt}. The objects representing (d+1) D-branes are complexes of  line bundles of degree $n_i$, $L(n_i, u_i)$ ($i=0, \cdots, d$), where $u_i$ represent connections (gauge fields) over the D-branes. The morphisms representing open strings between $L(n_{i-1}, u_{i-1})$ and $L(n_i, u_i)$ are $H^{0,p}(E_{\tau}, L(n_{i-1, i}, u_{i-1, i}))$, where $p$ represent the degrees of the grading ($p=0,1$, $n_{i-1, i}:=n_i-n_{i-1}$ $u_{i-1, i}:=u_i-u_{i-1}$).

 When $n>0$, $H^{0,1}(E_{\tau}, L(n, u))=0$ and the elements of $H^{0,0}(E_{\tau}, L(n, u))$ are the basis of global sections of $L(n, u)$:
\begin{equation}
\theta
\begin{bmatrix}
\frac{m}{n} \\
u
\end{bmatrix}
(n z, n \tau)
\in
H^{0,0}(E_{\tau}, L(n, u)) \,\,\,
(m=0, \cdots, n-1),
\end{equation}
where $\theta
\begin{bmatrix}
a\\
b
\end{bmatrix}
(z, \tau)$
are theta functions with characteristics.

When $n<0$, $H^{0,0}(E_{\tau}, L(n, u))=0$ and the elements of $H^{0,1}(E_{\tau}, L(n, u))$ are the basis of harmonic (0,1)-forms with values in the dual bundle $L(n, u)$:
\begin{equation}
\overline{\theta
\begin{bmatrix}
\frac{m}{n} \\
-u
\end{bmatrix}
(-n z, -n \tau)}
\exp(2 \pi t (n z_2^2 +2 u_2 z_2)) d\bar{z}
\in
H^{0,1}(E_{\tau}, L(n, u)),
\end{equation}
where $\overline{\theta}$ is the complex conjugate of $\theta$, $\tau = A + i t, z= z_1+ \tau z_2, u=u_1+ \tau u_2$, and $A, t, z_1, z_2, u_1, u_2 \in \bold{R}$.

We are concerned with the $u=0$ case for simplicity. One can easily introduce the connections. We simplify the expressions as
\begin{equation}
\theta_{n}[p] (z,\tau):=
\theta
\begin{bmatrix}
p\\
0
\end{bmatrix}
(n z, n \tau).
\end{equation}
In this case, the strings are represented by 
\begin{eqnarray}
&&\theta_n [\frac{m}{n}](z, \tau)
\in
H^{0,0}(E_{\tau}, L(n)) \,\,\, (n > 0),
\nonumber \\
&&\theta^*_n [\frac{m}{n}](z, \bar{z}, \tau, \bar{\tau})d \bar{z}
:=
\overline{\theta_{-n} [\frac{m}{n}](z, \tau)}
\exp(2 \pi t n z_2^2) d\bar{z}
\in
H^{0,1}(E_{\tau}, L(n)) \,\,\, (n < 0).
\end{eqnarray}

Topological string amplitudes on $E_{\tau}$ are defined as \cite{WittenMirror, AspinwallKatz, BraneMirror}
\begin{equation}
\int_{E_{\tau}}
\Omega \wedge
\theta^*_n [\frac{m}{n}]d \bar{z}
\theta_{n_{i_1-1,i_1}} [\frac{m_{i_1-1,i_1}}{n_{i_1-1,i_1}}] \cdots \theta_{n_{i_c-1,i_c}} [\frac{m_{i_c-1,i_c}}{n_{i_c-1,i_c}}]
\psi_{n_{i_{c+1}-1,i_{c+1}}} [\frac{m_{i_{c+1}-1,i_{c+1}}}{n_{i_{c+1}-1,i_{c+1}}}] \cdots \psi_{n_{i_d-1,i_d}} [\frac{m_{i_d-1,i_d}}{n_{i_d-1,i_d}}], \label{amplitude1}
\end{equation}
where $n_{i_1-1,i_1}, \cdots, n_{i_c-1,i_c} >0$ and $n_{i_{c+1}-1,i_{c+1}}, \cdots, n_{i_d-1,i_d}, n <0$. $\Omega=dz$ is the holomorphic (1,0)-form, 
$\theta^*_n [\frac{m}{n}]d \bar{z}
\in H^{0,1}(E_{\tau}, L(n))$, and
$\theta_{n_{i_k-1,i_k}} [\frac{m_{i_k-1,i_k}}{n_{i_k-1,i_k}}]
\in H^{0,0}(E_{\tau}, L(n_{i_k-1,i_k}))$ $(k=1, \cdots, c)$. Here we explain what $\psi_{n} [\frac{m}{n}]$ is. In order to define topological string amplitudes of more than one $n<0$ states, we need to deform the theory by those states because more than one (0,1)-form cannot enter the topological string amplitudes in one dimension. The derived category describes such a deformed theory. In the topological string theory, the deformation by $n<0$ string states $O^{(0)}\in \Omega^{0,1}$ is given as follows. We define $O^{(1)}$ by $ \tilde{d} O^{(0)}= \{ Q, O^{(1)} \}$, where $\tilde{d}$ is a world sheet differential and $Q$ is a BRST operator. Then, the deformation of the theory is to insert $\psi:=\int_{\partial\Sigma} O^{(1)} \in \Omega^{0,0}$, where $\partial\Sigma$ is a world sheet boundary. 

In our case, we define this deformation by an isomorphism $\psi$: $\Omega^{0,1} \to \Omega^{0,0}$ by
\begin{equation}
O^{(0)}= \theta^*_n [\frac{m}{n}](z, \bar{z}, \tau, \bar{\tau})d \bar{z} 
\mapsto
\psi=\theta_n [\frac{m}{n}](z, \tau).
\end{equation}
That is, $\theta_n [\frac{m}{n}](z, \tau) \in \Omega^{0,0}$ represent string states, when not only $n>0$ but also $n<0$. This isomorphism will be justified later by mirror symmetry of the $A_{\infty}$ structure and $m_d$. By using this isomorphism, (\ref{amplitude1}) is written as
\begin{equation}
\int_{E_{\tau}}
\Omega \wedge
\theta^*_n [\frac{m}{n}]d \bar{z}
\theta_{n_{0,1}} [\frac{m_{0,1}}{n_{0,1}}] 
\theta_{n_{1,2}} [\frac{m_{1,2}}{n_{1,2}}] 
\cdots \theta_{n_{d-1, d}} [\frac{m_{d-1,d}}{n_{d-1, d}}].\label{amplitude2}
\end{equation}
On the other hand, (\ref{amplitude1}) should also be written by using $m_d$ in $A_{\infty}$-category \cite{WittenMirror, AspinwallKatz} like 
\begin{equation}
\sim
\int_{E_{\tau}}
\Omega \wedge
\theta^*_n [\frac{m}{n}]d \bar{z}
\,\,\,m_{d}(\frac{m_{0,1}}{n_{0,1}}, \cdots, \frac{m_{d-1,d}}{n_{d-1,d}}).
\label{amplitude2}
\end{equation}
Therefore, $m_d$ should be defined like
\begin{eqnarray}
&&m_{d}(\frac{m_{0,1}}{n_{0,1}}, \cdots, \frac{m_{d-1,d}}{n_{d-1,d}})
\nonumber \\
&\sim&
\theta_{n_{0,1}} [\frac{m_{0,1}}{n_{0,1}}] \cdots 
\theta_{n_{d-1, d}} [\frac{m_{d-1,d}}{n_{d-1, d}}].
\label{mdlike}
\end{eqnarray}
We will define completely $m_d$ that possesses $A_{\infty}$ structure in the next section. 

Here we discuss consistency of the integration over $E_{\tau}$ with the periodicity. Whereas the theta functions are invariant under $z \mapsto z+ 1$, they are transformed under $z \mapsto z+ \tau$ as
\begin{eqnarray}
\theta_n [\frac{m}{n}](z, \tau) 
&\mapsto&
e^{-\pi i n (2z+\tau)} \theta_n [\frac{m}{n}](z, \tau) \nonumber \\
\theta^*_n [\frac{m}{n}](z, \bar{z}, \tau, \bar{\tau})d \bar{z}
&\mapsto&
e^{-\pi i n (2z+\tau)} \theta^*_n [\frac{m}{n}](z, \bar{z}, \tau, \bar{\tau})d \bar{z}.
\end{eqnarray}
Then, the integrand is transformed as
\begin{eqnarray}
&&\theta^*_n [\frac{m}{n}]d \bar{z}
\theta_{n_{0,1}} [\frac{m_{0,1}}{n_{0,1}}] \cdots \theta_{n_{d-1, d}} [\frac{m_{d-1,d}}{n_{d-1, d}}] \nonumber \\
&\mapsto&
e^{-\pi i (2z+\tau)(n+n_{0,1}+ \cdots +n_{d-1,d})}
\theta^*_n [\frac{m}{n}]d \bar{z}
\theta_{n_{0,1}} [\frac{m_{0,1}}{n_{0,1}}] \cdots  \theta_{n_{d-1, d}} [\frac{m_{d-1,d}}{n_{d-1, d}}]. \nonumber
\end{eqnarray}
Because this should be invariant for periodicity, $n$ needs to be $-(n_{0,1}+ \cdots +n_{d-1,d})$ and then $n_{0,1}+ \cdots +n_{d-1, d}>0$.

\vspace{1cm}

\section{$m_d$ and $A_{\infty}$ Structure}
\setcounter{equation}{0}
In order to define $m_d$, we multiply theta functions with characteristics. They are defined by series as
\begin{definition}[theta functions with characteristics]
\begin{equation}
\theta
\begin{bmatrix}
a \\
b
\end{bmatrix}
(z, \tau)
:=
\sum_{m \in \bold{Z}}
\exp(\pi i (m+a)^2 \tau +2 \pi i (m+a)(z+b)),
\nonumber
\end{equation}
where
$a \in \bold{R}/\bold{Z}$, $b \in \bold{C}$.
\end{definition}
A product formula is given by
\begin{theorem}
\begin{eqnarray}
&&\theta
\begin{bmatrix}
\frac{a_1}{n_1} \\
0
\end{bmatrix}
(z_1, n_1\tau)
\,\,\,
\theta
\begin{bmatrix}
\frac{a_2}{n_2} \\
0
\end{bmatrix}
(z_2, n_2\tau) \nonumber \\ 
&=&
\sum_{d \in \bold{Z}/(n_1+n_2)\bold{Z}}
\theta
\begin{bmatrix}
\frac{n_2 a_1 - n_1 a_2 +n_1 n_2 d}{n_1 n_2 (n_1+n_2)} \\ 
0
\end{bmatrix}
(n_2 z_1 - n_1 z_2, n_1 n_2 (n_1+n_2) \tau) \,\,\,
\theta
\begin{bmatrix}
\frac{a_1 + a_2 +n_1 d}{n_1+n_2} \\
0
\end{bmatrix}
(z_1 + z_2, (n_1+n_2) \tau), 
\nonumber 
\end{eqnarray}
where $n_1, n_2 \in \bold{Z}$, $a_1, a_2 \in \bold{R}$, and $z_1, z_2, \tau \in \bold{C}$. 
\end{theorem}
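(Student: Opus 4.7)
The strategy is a direct term-by-term comparison of both sides, expanded as absolutely convergent double series, via a linear change of summation index.

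First, I would expand the left-hand side as a double sum over $(m_1, m_2) \in \mathbb{Z}^2$ and write the combined exponent in the shifted variables $\mu_i := m_i + a_i/n_i$. The quadratic part reads $\pi i (n_1 \mu_1^2 + n_2 \mu_2^2)\tau$ and the linear part $2\pi i(\mu_1 z_1 + \mu_2 z_2)$. The key algebraic identity is that under the substitution
\[
\nu_1 := \frac{\mu_1 - \mu_2}{n_1+n_2}, \qquad \nu_2 := \frac{n_1 \mu_1 + n_2 \mu_2}{n_1+n_2},
\]
one finds $n_1 \mu_1^2 + n_2 \mu_2^2 = n_1 n_2(n_1+n_2)\,\nu_1^2 + (n_1+n_2)\,\nu_2^2$ and $\mu_1 z_1 + \mu_2 z_2 = \nu_1(n_2 z_1 - n_1 z_2) + \nu_2(z_1 + z_2)$. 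This is exactly the decoupling that produces the two moduli $n_1 n_2(n_1+n_2)\tau$ and $(n_1+n_2)\tau$ with the arguments appearing on the right-hand side.

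Second, I would handle the change of summation domain. The linear map $(m_1,m_2) \mapsto (M,N) := (n_1 m_1 + n_2 m_2,\; m_1 - m_2)$ has determinant $-(n_1+n_2)$ and identifies $\mathbb{Z}^2$ with the sublattice $\{(M,N) \in \mathbb{Z}^2 : M + n_2 N \equiv 0 \pmod{n_1+n_2}\}$ of $\mathbb{Z}^2$. I would parametrize this sublattice by writing $N = (n_1+n_2)k_1 + d$ and $M = (n_1+n_2)k_2 + n_1 d$ with $(k_1,k_2,d) \in \mathbb{Z}^2 \times \mathbb{Z}/(n_1+n_2)\mathbb{Z}$, noting that the congruence is automatic and the map is a bijection. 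Translating back through $\nu_1, \nu_2$ gives
\[
\nu_1 = k_1 + \frac{n_2 a_1 - n_1 a_2 + n_1 n_2 d}{n_1 n_2(n_1+n_2)}, \qquad \nu_2 = k_2 + \frac{a_1 + a_2 + n_1 d}{n_1+n_2},
\]
which are precisely the characteristics of the two theta factors on the right-hand side.

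Finally, for each fixed $d$ the independent sums over $k_1$ and $k_2$ reassemble the two theta functions, and summing over $d \in \mathbb{Z}/(n_1+n_2)\mathbb{Z}$ completes the identity. The main obstacle is bookkeeping: one must verify carefully that the composition $(m_1,m_2) \leftrightarrow (k_1,k_2,d)$ is a genuine bijection, that the shifts $a_i/n_i$ recombine into the stated characteristics, and that no term on either side is double-counted. The diagonalization of the quadratic form is essentially forced by the target moduli on the right, so once the correct change of variable is identified the remaining verification is a matter of algebraic accounting.
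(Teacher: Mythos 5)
Your proposal is correct and takes essentially the same route as the paper: the paper's substitution $m_1 = l + (n_1+n_2)m + d$, $m_2 = l$ followed by $n = l + n_1 m$ is exactly your index bijection $(m_1,m_2)\leftrightarrow(k_1,k_2,d)$ with $k_1=m$, $k_2=n$, and your diagonalization of the quadratic form is what the paper checks by direct expansion of the exponents. The only caveat, which the paper shares, is that for general $n_1, n_2 \in \mathbf{Z}$ of mixed sign some of the constituent series need not converge, so "absolutely convergent" should be read as a formal-series or suitably restricted statement.
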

While this formula was proved as an addition formula when $n_1, n_2 \in \bold{N}$ in \cite{Mumford}, it can also be proved as series when $n_1, n_2 \in \bold{Z}$ as follows.
\begin{proof}
\begin{eqnarray}
&&\mbox{(l.h.s.)} \nonumber \\ 
&=&
\sum_{m_1, m_2 \in \bold{Z}}
\exp(\pi i n_1 \tau (m_1 + \frac{a_1}{n_1})^2 
+2 \pi i (m_1+ \frac{a_1}{n_1}) z_1
+\pi i n_2 \tau (m_2 + \frac{a_2}{n_2})^2 
+2 \pi i (m_2+ \frac{a_2}{n_2}) z_2) \nonumber \\
&=&
\sum_{l, m \in \bold{Z}, d \in \bold{Z}/(n_1+n_2)\bold{Z}}
\exp \pi i( (n_1+n_2)\tau l^2 +2 ((a_1+a_2) \tau +z_1+z_2)l
+2 n_1 \tau l d + n_1 \tau d^2 \nonumber \\
&& \qquad \qquad \qquad \qquad \qquad 
 + (2 a_1 \tau +2z_1)d
+n_1 \tau (n_1+n_2)^2 m^2 + 2n_1(n_1+n_2) \tau l m \nonumber \\
&& \qquad \qquad \qquad \qquad \qquad 
+2 (a_1 \tau + z_1)(n_1+n_2) m 
+ 2n_1(n_1+n_2) \tau m d + \nonumber \\
&& \qquad \qquad \qquad \qquad \qquad 
\frac{a_1^2}{n_1}\tau +\frac{a_2^2}{n_2} \tau
+2\frac{a_1}{n_1} z_1 +2 \frac{a_2}{n_2} z_2), \label{lhs1}
\end{eqnarray}
where we have set
$m_1=l+p$, $m_2=l$ ($l, p \in \bold{Z}$), and 
$p=(n_1+n_2)m+d$ ($m \in \bold{Z}$, $d \in \bold{Z}/(n_1+n_2)\bold{Z}$). On the other hand,
\begin{eqnarray}
&&\mbox{(r.h.s.)}\nonumber \\ 
&=&
\sum_{m, n \in \bold{Z}, d \in \bold{Z}/(n_1+n_2) \bold{Z}}
\exp(\pi i n_1 n_2 (n_1+n_2) \tau (m+\frac{n_2 a_1 -n_1 a_2+n_1 n_2 d}{n_1 n_2 (n_1+n_2)})^2 \nonumber \\ 
&& \qquad \qquad \qquad \qquad \qquad
+ 2\pi i(m+\frac{n_2 a_1 -n_1 a_2+n_1 n_2 d}{n_1 n_2 (n_1+n_2)})(n_2 z_1-n_1 z_2))\nonumber \\ 
&& \qquad \qquad \qquad \qquad \qquad
+\pi i(n_1+n_2)\tau(n+\frac{a_1+a_2+n_1 d}{n_1+n_2})^2
\nonumber \\ 
&& \qquad \qquad \qquad \qquad \qquad
+2\pi i(n+\frac{a_1+a_2+n_1 d}{n_1+n_2})(z_1+z_2)).
\label{rhs1}
\end{eqnarray}
If we set $n=l+n_1 m$ ($l \in \bold{Z}$), (\ref{rhs1}) coincides with (\ref{lhs1}).
\end{proof}
In a special case: $z_1=n_1 z + u_1$ and $z_2=n_2 z + u_2$ $(u_1, u_2 \in \bold{C})$, we obtain 
\begin{eqnarray}
&&\theta
\begin{bmatrix}
\frac{a_1}{n_1} \\
u_1
\end{bmatrix}
(n_1 z, n_1\tau)
\,\,\,
\theta
\begin{bmatrix}
\frac{a_2}{n_2} \\
u_2
\end{bmatrix}
(n_2 z, n_2\tau) \nonumber \\ 
&=&
\sum_{d \in \bold{Z}/(n_1+n_2)\bold{Z}}
\theta
\begin{bmatrix}
\frac{n_2 a_1 - n_1 a_2 +n_1 n_2 d}{n_1 n_2 (n_1+n_2)} \\ 
n_2 u_1 -n_1 u_2
\end{bmatrix}
(0, n_1 n_2 (n_1+n_2) \tau) \,\,\,
\theta
\begin{bmatrix}
\frac{a_1 + a_2 +n_1 d}{n_1+n_2} \\
u_1 + u_2
\end{bmatrix}
((n_1+n_2)z, (n_1+n_2) \tau). \nonumber
\end{eqnarray}
This product is expanded by the theta functions with complex coefficients. The coefficients are independent of $z$. We simplify this formula. 
Because $\frac{a_1 + a_2 +n_1 d}{n_1+n_2} \in \bold{R}/\bold{Z}$ on the last line, we can add $-\frac{(n_1+n_2)}{(n_1+n_2)}d$ and obtain
\begin{eqnarray}
&&\theta
\begin{bmatrix}
\frac{a_1}{n_1} \\
u_1
\end{bmatrix}
(n_1 z, n_1\tau)
\,\,\,
\theta
\begin{bmatrix}
\frac{a_2}{n_2} \\
u_2
\end{bmatrix}
(n_2 z, n_2\tau) \nonumber \\ 
&=&
\sum_{d \in \bold{Z}/(n_1+n_2)\bold{Z}}
\theta
\begin{bmatrix}
\frac{n_2 a_1 - n_1 (a_2 -n_2 d)}{n_1 n_2 (n_1+n_2)} \\ 
n_2 u_1 -n_1 u_2
\end{bmatrix}
(0, n_1 n_2 (n_1+n_2) \tau) \,\,\,
\theta
\begin{bmatrix}
\frac{a_1 + (a_2 -n_2 d)}{n_1+n_2} \\
u_1 + u_2
\end{bmatrix}
((n_1+n_2)z, (n_1+n_2) \tau) \nonumber \\
&=&\sum_{m \in \bold{Z}, d \in \bold{Z}/(n_1+n_2) \bold{Z}}
\exp(\pi i n_1 n_2 (n_1+n_2) \tau (m+\frac{n_2 a_1 -n_1 (a_2-n_2 d)}{n_1 n_2 (n_1+n_2)})^2 \nonumber \\ 
&& \qquad \qquad \qquad \qquad \qquad
+ 2\pi i(m+\frac{n_2 a_1 -n_1 (a_2 -n_2 d)}{n_1 n_2 (n_1+n_2)})(n_2 u_1-n_1 u_2))\nonumber \\ 
&& \qquad \qquad \qquad 
\theta
\begin{bmatrix}
\frac{a_1 + (a_2 -n_2 d)}{n_1+n_2} \\
u_1 + u_2
\end{bmatrix}
((n_1+n_2)z, (n_1+n_2) \tau). \nonumber
\end{eqnarray}
Because $\frac{a_1 + (a_2 -n_2 d)}{n_1+n_2} \in \bold{R}/\bold{Z}$ on the last line, we can add $-\frac{n_2 (n_1+n_2) m}{(n_1+n_2)}d$. By defining $\alpha:=(n_1+n_2)m+d$ $(\alpha \in \bold{Z})$, we obtain
\begin{lemma}
\begin{eqnarray}
&&\theta
\begin{bmatrix}
\frac{a_1}{n_1} \\
u_1
\end{bmatrix}
(n_1 z, n_1\tau)
\,\,\,
\theta
\begin{bmatrix}
\frac{a_2}{n_2} \\
u_2
\end{bmatrix}
(n_2 z, n_2\tau) \nonumber \\ 
&=&\sum_{\alpha \in \bold{Z}}
\exp(\pi i \tau \frac{(n_2 a_1 -n_1 (a_2-n_2 \alpha))^2}{n_1 n_2 (n_1+n_2)}
+ 2\pi i(\frac{n_2 a_1 -n_1 (a_2 -n_2 \alpha)}{n_1 n_2 (n_1+n_2)})(n_2 u_1-n_1 u_2))\nonumber \\ 
&& \qquad  
\theta
\begin{bmatrix}
\frac{a_1 + (a_2 -n_2 \alpha)}{n_1+n_2} \\
u_1 + u_2
\end{bmatrix}
((n_1+n_2)z, (n_1+n_2) \tau).
\nonumber
\end{eqnarray}
Especially, when $u=0$ we obtain 
\begin{eqnarray}
&&\theta_{n_{1}}[\frac{a_{1}}{n_{1}}] (z,\tau)
\,\,\,
\theta_{n_{2}}[\frac{a_{2}}{n_{2}}] (z,\tau) \nonumber \\ 
&=&\sum_{\alpha \in \bold{Z}}
\exp(\pi i \tau \frac{(n_{2} a_{1} -n_{1} (a_{2}-n_{2} \alpha))^2}{n_{1} n_{2} (n_{1}+n_{2})}) 
\theta_{n_{1}+n_{2}}[\frac{a_{1} + (a_{2} -n_{2} \alpha)}{n_{1}+n_{2}}] (z,\tau).
\nonumber
\end{eqnarray}
\label{lemma1}
\end{lemma}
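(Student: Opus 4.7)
The plan is to specialize the Theorem to $z_1 = n_1 z + u_1$ and $z_2 = n_2 z + u_2$ and then tidy up the resulting sum. First, from Definition~1 one reads off the quasi-periodicity $\theta\begin{bmatrix}a\\0\end{bmatrix}(z+u,\tau) = \theta\begin{bmatrix}a\\u\end{bmatrix}(z,\tau)$, since $u$ simply enters the series as a shift of the $(z+b)$ factor with $b=0$. Applying this on both sides of the Theorem, and observing that $n_2 z_1 - n_1 z_2 = n_2 u_1 - n_1 u_2$ carries no $z$-dependence while $z_1 + z_2 = (n_1+n_2)z + u_1+u_2$, produces the identity displayed just above Lemma~1 in which the $u_i$ sit in the lower entries of the characteristics.

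Next I would simplify the summand in a $d$-uniform way. The first theta's numerator rewrites algebraically as $n_2 a_1 - n_1 a_2 + n_1 n_2 d = n_2 a_1 - n_1(a_2 - n_2 d)$, while in the second theta the expressions $\frac{a_1+a_2+n_1 d}{n_1+n_2}$ and $\frac{a_1+(a_2-n_2 d)}{n_1+n_2}$ differ by $d \in \bold{Z}$; since characteristics take values in $\bold{R}/\bold{Z}$, they define the same theta. Both factors are then expressed through the single combination $a_2 - n_2 d$.

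Third, I would expand the first theta (whose evaluation point is $0$) using the series of Definition~1, obtaining a sum over $m \in \bold{Z}$. The double index $(m,d) \in \bold{Z} \times \bold{Z}/(n_1+n_2)\bold{Z}$ is then repackaged as a single index $\alpha := (n_1+n_2)m + d \in \bold{Z}$. A direct computation gives $n_1 n_2 (n_1+n_2) m + n_2 a_1 - n_1(a_2 - n_2 d) = n_2 a_1 - n_1(a_2 - n_2 \alpha)$, which collapses both the quadratic and linear exponential terms into the functions of $\alpha$ claimed in the Lemma. The surviving theta's characteristic $\frac{a_1+(a_2-n_2 d)}{n_1+n_2}$ differs from $\frac{a_1+(a_2-n_2\alpha)}{n_1+n_2}$ by $n_2 m \in \bold{Z}$, so they agree in $\bold{R}/\bold{Z}$ and may be interchanged. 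This yields the first identity of the Lemma, and the $u=0$ case is immediate by setting $u_1=u_2=0$, which kills the linear exponential.

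The main bookkeeping obstacle is verifying that the quadratic exponent $\pi i n_1 n_2 (n_1+n_2) \tau \bigl(m + \tfrac{n_2 a_1 - n_1(a_2-n_2 d)}{n_1 n_2(n_1+n_2)}\bigr)^2$ collapses precisely to $\pi i \tau \tfrac{(n_2 a_1 - n_1(a_2-n_2\alpha))^2}{n_1 n_2(n_1+n_2)}$ under $\alpha = (n_1+n_2) m + d$. This is routine but the denominators require care, and one must check that nothing else in the summand (beyond what is absorbed into the $\bold{R}/\bold{Z}$ shifts) retains a residual dependence on the splitting $\alpha = (n_1+n_2)m + d$ rather than on $\alpha$ alone.
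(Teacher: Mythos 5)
Your proposal is correct and follows essentially the same route as the paper: specialize Theorem~1 to $z_1=n_1z+u_1$, $z_2=n_2z+u_2$ (using $\theta\big[\begin{smallmatrix}a\\0\end{smallmatrix}\big](z+u,\tau)=\theta\big[\begin{smallmatrix}a\\u\end{smallmatrix}\big](z,\tau)$), shift the characteristics by integers to express everything through $a_2-n_2d$, expand the $z$-independent theta as a series over $m$, and recombine $(m,d)$ into the single index $\alpha=(n_1+n_2)m+d$. The bookkeeping identities you flag (the collapse of the quadratic and linear exponents and the $n_2m\in\bold{Z}$ shift of the surviving characteristic) all check out, so nothing is missing.
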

From now on, we abbreviate $\theta_{n}[p] (z,\tau)$ to $\theta_{n}[p]$ because $(z,\tau)$  does not vary. By using this formula, we obtain
\begin{theorem}
\begin{eqnarray}
&&\theta_{n_{0,1}}[\frac{a_{0,1}}{n_{0,1}}]
\theta_{n_{1,2}}[\frac{a_{1,2}}{n_{1,2}}] \cdots
\theta_{n_{d-1,d}}[\frac{a_{d-1, d}}{n_{d-1,d}}] \nonumber \\
&=&\sum_{\alpha_{1,2}, \cdots, \alpha_{d-1, d} \in \bold{Z}}
\exp(2 \pi i \tau S_d(\frac{\tilde{a}_{0,1}}{n_{0,1}}, \cdots, \frac{\tilde{a}_{d-1,d}}{n_{d-1,d}}))\theta_{n_{0,1}+ \cdots +n_{d-1,d}}[\frac{\tilde{a}_{0,1}+ \cdots + \tilde{a}_{d-1,d}}{n_{0,1}+ \cdots +n_{d-1,d}}], 
\nonumber
\end{eqnarray}
where 
\begin{eqnarray}
&&\tilde{a}_{i-1,i}:=a_{i-1,i}-n_{i-1,i}\alpha_{i-1,i} \,\,\,(i=1, \cdots, d) \nonumber \\
&&\alpha_{0,1} \equiv 0 \nonumber \\
&&S_d(p_{0,1}, \cdots, p_{d-1,d})
:=\frac{1}{2}((\sum_{i=1}^d n_{i-1,i} p_{i-1,i}(p_{i-1,i}-1)) 
-n_{0,d}p_{0,d}(p_{0,d}-1)) \nonumber \\
&&n_{0,d}:= n_{0,1}+ \cdots +n_{d-1,d} \nonumber \\
&&p_{0,d}:= \frac{1}{n_{0,d}}\sum_{i=1}^d n_{i-1,i} p_{i-1,i}. 
\nonumber
\end{eqnarray}
\label{producttheorem}
\end{theorem}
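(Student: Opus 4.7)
I would prove this by induction on $d$, using the two-factor product formula (Lemma~\ref{lemma1} with $u=0$) as both the base case and the merging step. For $d=2$ the theorem is Lemma~\ref{lemma1} itself: setting $\alpha_{0,1}\equiv 0$ (so $\tilde a_{0,1}=a_{0,1}$) and $\alpha_{1,2}$ as the summation index, a short expansion gives
\[ S_2=\frac{(n_{1,2}\tilde a_{0,1}-n_{0,1}\tilde a_{1,2})^2}{2\,n_{0,1}n_{1,2}(n_{0,1}+n_{1,2})}, \]
so that $2\pi i\tau S_2$ is precisely the exponent appearing in Lemma~\ref{lemma1}.

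For the inductive step, I assume the formula for $d-1$ factors and split the product as $\bigl(\prod_{i=1}^{d-1}\theta_{n_{i-1,i}}[\tfrac{a_{i-1,i}}{n_{i-1,i}}]\bigr)\,\theta_{n_{d-1,d}}[\tfrac{a_{d-1,d}}{n_{d-1,d}}]$. Applying the induction hypothesis to the parenthesized factor converts it into a sum over $\alpha_{1,2},\ldots,\alpha_{d-2,d-1}$ of $\exp(2\pi i\tau S_{d-1})\,\theta_{n_{0,d-1}}[\tfrac{P_{d-1}}{n_{0,d-1}}]$, where $P_{d-1}:=\tilde a_{0,1}+\cdots+\tilde a_{d-2,d-1}$. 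A second application of Lemma~\ref{lemma1} then multiplies this single theta with $\theta_{n_{d-1,d}}[\tfrac{a_{d-1,d}}{n_{d-1,d}}]$, introducing the last summation variable $\alpha_{d-1,d}$, producing the target $\theta_{n_{0,d}}[\tfrac{P_{d-1}+\tilde a_{d-1,d}}{n_{0,d}}]$, and supplying an extra exponential factor $\exp\!\bigl(\pi i\tau\,(n_{d-1,d}P_{d-1}-n_{0,d-1}\tilde a_{d-1,d})^2/(n_{0,d-1}n_{d-1,d}n_{0,d})\bigr)$.

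What remains is the quadratic identity $S_{d-1}+\tfrac{1}{2}(n_{d-1,d}P_{d-1}-n_{0,d-1}\tilde a_{d-1,d})^2/(n_{0,d-1}n_{d-1,d}n_{0,d})=S_d$. The cleanest route is to first rewrite $S_d$ in the closed form
\[ S_d=\tfrac{1}{2}\sum_{i=1}^{d}\tfrac{\tilde a_{i-1,i}^{\,2}}{n_{i-1,i}}-\tfrac{1}{2\,n_{0,d}}\bigl(\textstyle\sum_{i=1}^{d}\tilde a_{i-1,i}\bigr)^{2}, \]
which holds because the linear-in-$p$ terms in the defining expression cancel, using $\sum_i n_{i-1,i}p_{i-1,i}=n_{0,d}p_{0,d}$. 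In this form both $S_{d-1}$ and $S_d$ are manifestly quadratic in the $\tilde a_{i-1,i}$, and (after clearing denominators) the required identity reduces to the polynomial fact $(n_{d-1,d}P_{d-1}-n_{0,d-1}\tilde a_{d-1,d})^{2}=n_{0,d-1}n_{0,d}\tilde a_{d-1,d}^{\,2}+n_{d-1,d}n_{0,d}P_{d-1}^{\,2}-n_{0,d-1}n_{d-1,d}(P_{d-1}+\tilde a_{d-1,d})^{2}$, verified by direct expansion using $n_{0,d}=n_{0,d-1}+n_{d-1,d}$.

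The main obstacle is purely algebraic bookkeeping: keeping the change of summation variables $\tilde a_{i-1,i}=a_{i-1,i}-n_{i-1,i}\alpha_{i-1,i}$ consistent through the merging steps, and observing that the linear-in-$p$ terms in the defining expression for $S_d$ cancel so that $S_d$ is purely quadratic. Once that closed form is in hand, the induction reduces to a one-line polynomial identity; without this reformulation, the cross terms between the $-\tfrac{1}{2}p_{i-1,i}$ contributions and the merging exponent make the accounting substantially messier.
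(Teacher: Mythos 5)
Your proposal is correct and follows essentially the same route as the paper: iterate the two-factor formula of Lemma~\ref{lemma1}, with the whole argument resting on the telescoping identity that $S_{k+1}-S_{k}$ equals the two-factor exponent for the accumulated product against the next factor. Your extra step of rewriting $S_d$ in the purely quadratic closed form to verify that identity is a detail the paper merely asserts, but it does not change the approach.
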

\begin{proof}
We start with $d=2$ case. Explicitly,
\begin{equation}
S_2(\frac{\tilde{a}_{0,1}}{n_{0,1}}, \frac{\tilde{a}_{1,2}}{n_{1,2}})
=
\frac{(n_{1,2} \tilde{a}_{0,1} -n_{0,1} \tilde{a}_{1,2})^2}{2 n_{0,1} n_{1,2} (n_{0,1}+n_{1,2})}.
\end{equation}
Then, Lemma \ref{lemma1} implies Theorem \ref{producttheorem} is right when $d=2$. For $2 \le k \le d-1$, explicitly, 
\begin{eqnarray}
&&S_{k+1}(\frac{\tilde{a}_{0,1}}{n_{0,1}}, \cdots,  \frac{\tilde{a}_{k,k+1}}{n_{k,k+1}})
-S_{k}(\frac{\tilde{a}_{0,1}}{n_{0,1}}, \cdots,  \frac{\tilde{a}_{k-1,k}}{n_{k-1,k}}) \nonumber \\
&=&
\frac{(n_{k,k+1} (\tilde{a}_{0,1}+ \cdots + \tilde{a}_{k-1,k}) -(n_{0,1}+ \cdots +n_{k-1,k}) \tilde{a}_{k,k+1})^2}{2 (n_{0,1}+ \cdots +n_{k-1,k}) n_{k,k+1} (n_{0,1}+ \cdots +n_{k-1,k}+n_{k,k+1})}.
\end{eqnarray}
Then, Lemma \ref{lemma1} implies 
\begin{eqnarray}
&&\theta_{n_{0,1}+ \cdots +n_{k-1,k}}[\frac{\tilde{a}_{0,1}+ \cdots + \tilde{a}_{k-1,k}}{n_{0,1}+ \cdots +n_{k-1,k}}]
\theta_{n_{k,k+1}}[\frac{a_{k,k+1}}{n_{k,k+1}}]  \nonumber \\
&=&\sum_{\alpha_{k+1} \in \bold{Z}}
\exp(2 \pi i \tau (S_{k+1}(\frac{\tilde{a}_{0,1}}{n_{0,1}}, \cdots, \frac{\tilde{a}_{k,k+1}}{n_{k,k+1}})-S_k(\frac{\tilde{a}_{0,1}}{n_{0,1}}, \cdots, \frac{\tilde{a}_{k-1,k}}{n_{k-1,k}})))\theta_{n_{0,1}+ \cdots +n_{k,k+1}}[\frac{\tilde{a}_{0,1}+ \cdots + \tilde{a}_{k,k+1}}{n_{0,1}+ \cdots +n_{k,k+1}}]. \nonumber
\end{eqnarray}
Therefore,
\begin{eqnarray}
&&\theta_{n_{0,1}}[\frac{a_{0,1}}{n_{0,1}}]
\theta_{n_{1,2}}[\frac{a_{1,2}}{n_{1,2}}] \cdots
\theta_{n_{d-1,d}}[\frac{a_{d-1, d}}{n_{d-1,d}}] \nonumber \\
&=&
\sum_{\alpha_{1,2}\in \bold{Z}}
\exp(2 \pi i \tau S_2(\frac{\tilde{a}_{0,1}}{n_{0,1}}, \frac{\tilde{a}_{1,2}}{n_{1,2}}))\theta_{n_{0,1}+n_{1,2}}[\frac{\tilde{a}_{0,1}+ \tilde{a}_{1,2}}{n_{0,1}+ n_{1,2}}] 
\theta_{n_{2,3}}[\frac{a_{2,3}}{n_{2,3}}] \cdots
\theta_{n_{d-1,d}}[\frac{a_{d-1, d}}{n_{d-1,d}}]
\nonumber \\
&=&
\sum_{\alpha_{1,2}\in \bold{Z}}
\exp(2 \pi i \tau S_2(\frac{\tilde{a}_{0,1}}{n_{0,1}}, \frac{\tilde{a}_{1,2}}{n_{1,2}}))
\sum_{\alpha_{2,3} \in \bold{Z}}
\exp(2 \pi i \tau (S_{3}(\frac{\tilde{a}_{0,1}}{n_{0,1}}, \frac{\tilde{a}_{1,2}}{n_{1,2}}, \frac{\tilde{a}_{2,3}}{n_{2,3}})-S_2(\frac{\tilde{a}_{0,1}}{n_{0,1}},  \frac{\tilde{a}_{1,2}}{n_{1,2}}))) \nonumber \\
&& \qquad \qquad\qquad\qquad\qquad\qquad\qquad
\theta_{n_{0,1}+n_{1,2}+n_{2,3}}[\frac{\tilde{a}_{0,1}+ \tilde{a}_{1,2} + \tilde{a}_{2,3}}{n_{0,1}+ n_{1,2} +n_{2,3}}] \theta_{n_{3,4}}[\frac{a_{3,4}}{n_{3,4}}] \cdots
\theta_{n_{d-1,d}}[\frac{a_{d-1, d}}{n_{d-1,d}}]
\nonumber \\
&=&
\sum_{\alpha_{1,2}, \alpha_{2,3} \in \bold{Z}}
\exp(2 \pi i \tau S_{3}(\frac{\tilde{a}_{0,1}}{n_{0,1}}, \frac{\tilde{a}_{1,2}}{n_{1,2}}, \frac{\tilde{a}_{2,3}}{n_{2,3}})) \theta_{n_{0,1}+n_{1,2}+n_{2,3}}[\frac{\tilde{a}_{0,1}+ \tilde{a}_{1,2} + \tilde{a}_{2,3}}{n_{0,1}+ n_{1,2} +n_{2,3}}] \theta_{n_{3,4}}[\frac{a_{3,4}}{n_{3,4}}] \cdots
\theta_{n_{d-1,d}}[\frac{a_{d-1, d}}{n_{d-1,d}}]
\nonumber \\
&=& \cdots
\nonumber \\
&=&
\sum_{\alpha_{1,2}, \cdots, \alpha_{d-1, d} \in \bold{Z}}
\exp(2 \pi i \tau S_d(\frac{\tilde{a}_{0,1}}{n_{0,1}}, \cdots, \frac{\tilde{a}_{d-1,d}}{n_{d-1,d}}))\theta_{n_{0,1}+ \cdots +n_{d-1,d}}[\frac{\tilde{a}_{0,1}+ \cdots + \tilde{a}_{d-1,d}}{n_{0,1}+ \cdots +n_{d-1,d}}]. 
\end{eqnarray}
\end{proof}
Next, we study $A_{\infty}$ structure. We define an extended theta function $\tilde{\theta}_n[\tilde{p}]$ whose configuration space is a universal cover $\bold{R}$ $(\ni \tilde{p})$ of the configuration space $\bold{R}/\bold{Z}$ $(\ni p)$ of the theta function $\theta_n[p]$. That is, $\tilde{\theta}_{n} [\frac{a}{n}+1] \neq \tilde{\theta}_{n} [\frac{a}{n}]$ whereas $\theta_{n} [\frac{a}{n}+1]=\theta_{n} [\frac{a}{n}]$. 
The product is defined as
\begin{definition}[product of $\tilde{\theta}$]
\begin{eqnarray}
&&\tilde{\theta}_{n_{0,1}}[\frac{a_{0,1}}{n_{0,1}}] (z,\tau)
\,\,\,
\tilde{\theta}_{n_{1,2}}[\frac{a_{1,2}}{n_{1,2}}] (z,\tau) \nonumber \\ 
&:=&
\exp(\pi i \tau \frac{(n_{1,2} a_{0,1} -n_{0,1} a_{1,2})^2}{n_{0,1} n_{1,2} (n_{0,1}+n_{1,2})}) 
\tilde{\theta}_{n_{0,1}+n_{1,2}}[\frac{a_{0,1} + a_{1,2}}{n_{0,1}+n_{1,2}}] (z,\tau) \label{universalproduct}. 
\nonumber
\end{eqnarray}
\end{definition}
This definition leads to 
\begin{lemma}
\begin{eqnarray}
&&\sum_{\alpha \in \bold{Z}}
\tilde{\theta}_{n_{0,1}}[\frac{a_{0,1}}{n_{0,1}}] (z,\tau)
\,\,\,
\tilde{\theta}_{n_{1,2}}[\frac{a_{1,2}}{n_{1,2}}-\alpha] (z,\tau) \nonumber \\ 
&=&
\sum_{\alpha \in \bold{Z}}
\exp(\pi i \tau \frac{(n_{1,2} a_{0,1} -n_{0,1} (a_{1,2}-n_{1,2} \alpha))^2}{n_{0,1} n_{1,2} (n_{0,1}+n_{1,2})}) 
\tilde{\theta}_{n_{0,1}+n_{1,2}}[\frac{a_{0,1} + (a_{1,2}-n_{1,2} \alpha)}{n_{0,1}+n_{1,2}}] (z,\tau). \nonumber
\end{eqnarray}
\label{coefficientlemma}
\end{lemma}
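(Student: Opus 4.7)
The plan is to observe that the statement is essentially an $\alpha$-parameterized reindexing of the product rule for $\tilde{\theta}$ given in the preceding Definition. The key algebraic manipulation is to rewrite the second factor on the left-hand side by absorbing the integer shift into the numerator of the characteristic:
\begin{equation}
\tilde{\theta}_{n_{1,2}}\!\left[\frac{a_{1,2}}{n_{1,2}}-\alpha\right](z,\tau)=\tilde{\theta}_{n_{1,2}}\!\left[\frac{a_{1,2}-n_{1,2}\alpha}{n_{1,2}}\right](z,\tau).
\nonumber
\end{equation}

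Once this reindexing is in place, I would apply the Definition of the product of $\tilde{\theta}$ directly, term by term, with $a_{1,2}$ replaced by $a_{1,2}-n_{1,2}\alpha$. This substitution transforms the $\alpha$-th summand on the left into
\begin{equation}
\exp\!\left(\pi i \tau\,\frac{\bigl(n_{1,2} a_{0,1}-n_{0,1}(a_{1,2}-n_{1,2}\alpha)\bigr)^2}{n_{0,1} n_{1,2}(n_{0,1}+n_{1,2})}\right)\tilde{\theta}_{n_{0,1}+n_{1,2}}\!\left[\frac{a_{0,1}+(a_{1,2}-n_{1,2}\alpha)}{n_{0,1}+n_{1,2}}\right](z,\tau),
\nonumber
\end{equation}
which is precisely the $\alpha$-th summand on the right-hand side. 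Summing over $\alpha\in\bold{Z}$ then yields the claimed identity.

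The only point requiring care, rather than difficulty, is the fact that we are working with $\tilde{\theta}$ on the universal cover $\bold{R}$ of $\bold{R}/\bold{Z}$, so the shifts by $\alpha$ genuinely produce distinct objects and cannot be collapsed as they would for ordinary $\theta_n[p]$; this is in fact the whole reason the lemma is worth stating, since it records how the coefficient in the product Definition depends on the representative chosen in the cover. There is no analytic subtlety here because the product is defined term by term and not as a limit, so convergence of the $\alpha$-sum is not a concern for the lemma itself.
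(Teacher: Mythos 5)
Your proposal is correct and is exactly the argument the paper intends: the paper's own proof is simply ``Clear.'', and your spelled-out version (absorbing the integer shift $\alpha$ into the characteristic as $\frac{a_{1,2}-n_{1,2}\alpha}{n_{1,2}}$ and then applying the product Definition term by term) is the unique reading of that. Your closing remark about why the lemma is nontrivial only on the universal cover is a helpful clarification, not a deviation.
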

\begin{proof}
Clear.
\end{proof}
The coefficients in this formula coincide with those of Lemma \ref{lemma1}. Furthermore, the definition leads to 
\begin{lemma}[local product in the configuration space]
\begin{equation}
\tilde{\theta}_{n_{0,1}}[\tilde{p}] (z,\tau)
\,\,\,
\tilde{\theta}_{n_{1,2}}[\tilde{p}] (z,\tau)
=
\tilde{\theta}_{n_{0,1}+n_{1,2}}[\tilde{p}] (z,\tau).\nonumber
\end{equation} 
\label{locallemma}
\end{lemma}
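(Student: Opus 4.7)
The approach is a direct substitution into the defining product formula for $\tilde{\theta}$. Since the claim is a pointwise identity in the configuration variable $\tilde{p}\in\mathbf{R}$, I would parametrize both factors in the form required by the definition, namely write $\tilde{p}=a_{0,1}/n_{0,1}=a_{1,2}/n_{1,2}$, which forces $a_{0,1}=n_{0,1}\tilde{p}$ and $a_{1,2}=n_{1,2}\tilde{p}$.

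Next I would plug these into the defining product. The exponential prefactor involves $(n_{1,2}a_{0,1}-n_{0,1}a_{1,2})^2$; substituting yields $(n_{1,2}n_{0,1}\tilde{p}-n_{0,1}n_{1,2}\tilde{p})^2=0$, so the prefactor collapses to $1$. The index of the resulting $\tilde{\theta}$ becomes
\begin{equation}
\frac{a_{0,1}+a_{1,2}}{n_{0,1}+n_{1,2}}
=\frac{n_{0,1}\tilde{p}+n_{1,2}\tilde{p}}{n_{0,1}+n_{1,2}}
=\tilde{p},
\nonumber
\end{equation}
so the product is $\tilde{\theta}_{n_{0,1}+n_{1,2}}[\tilde{p}]$, as asserted.

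There is essentially no obstacle here: the lemma is tautological once one notes that the quadratic form $(n_{1,2}a_{0,1}-n_{0,1}a_{1,2})^2$ is the squared denominator of the difference $a_{0,1}/n_{0,1}-a_{1,2}/n_{1,2}$, which vanishes exactly when the two characteristics agree in the universal cover. Conceptually, this is why the lemma is called a \emph{local} product: the nontrivial phase in the definition measures precisely the failure of the two characteristics to coincide, and disappears along the diagonal. I would note this interpretation in the proof but otherwise keep it a one-line calculation, which is presumably why the author simply wrote ``Clear.''
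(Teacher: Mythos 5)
Your proposal is correct and is essentially identical to the paper's own proof: the author likewise substitutes $a_{0,1}=n_{0,1}\tilde{p}$, $a_{1,2}=n_{1,2}\tilde{p}$ into the defining product, observes that the exponent $(n_{1,2}n_{0,1}\tilde{p}-n_{0,1}n_{1,2}\tilde{p})^2$ vanishes, and reads off the characteristic $\frac{n_{0,1}\tilde{p}+n_{1,2}\tilde{p}}{n_{0,1}+n_{1,2}}=\tilde{p}$. (One small aside: the proof the paper labels ``Clear.''\ belongs to the preceding Lemma~\ref{coefficientlemma}, not to this one, which is given exactly the one-line computation you describe.)
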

\begin{proof}
\begin{equation}
\mbox{(l.h.s.)}
=
\exp(\pi i \tau \frac{(n_{1,2} n_{0,1} \tilde{p} -n_{0,1} n_{1,2} \tilde{p})^2}{n_{0,1} n_{1,2} (n_{0,1}+n_{1,2})}) 
\tilde{\theta}_{n_{0,1}+n_{1,2}}[\frac{n_{0,1} \tilde{p} + n_{1,2} \tilde{p}}{n_{0,1}+n_{1,2}}] (z,\tau) 
=
\mbox{(r.h.s.)}.
\end{equation}
\end{proof}
Next, we define a propagator $\tilde{G}_n[\tilde{p}_0, \tilde{p}](z,\tau)$ in the configuration space as
\begin{definition}[propagator in the configuration space]
\begin{equation}
\tilde{G}_n[\tilde{p}_0, \tilde{p}](z,\tau)\tilde{\theta}_{n}[\tilde{p}] (z,\tau)
:=
\tilde{\theta}_{n}[\tilde{p}_0] (z,\tau).
\nonumber
\end{equation}
\end{definition}
As a result, Feynman diagrams appear in the configuration space (Fig. \ref{diagram0}):
\begin{eqnarray}
\tilde{\theta}_{n_{0,1}}[\frac{a_{0,1}}{n_{0,1}}]
\tilde{\theta}_{n_{1,2}}[\frac{a_{1,2}}{n_{1,2}}]
&=&
\tilde{G}_{n_{0,1}}[\frac{a_{0,1}}{n_{0,1}}, \tilde{p}]
\tilde{G}_{n_{1,2}}[\frac{a_{1,2}}{n_{1,2}}, \tilde{p}]
\tilde{\theta}_{n_{0,1}}[\tilde{p}]
\tilde{\theta}_{n_{1,2}}[\tilde{p}] \nonumber \\
&=&
\tilde{G}_{n_{0,1}}[\frac{a_{0,1}}{n_{0,1}}, \tilde{p}]
\tilde{G}_{n_{1,2}}[\frac{a_{1,2}}{n_{1,2}}, \tilde{p}]
\tilde{\theta}_{n_{0,1}+n_{1,2}}[\tilde{p}].
 \nonumber \\
&=&
\tilde{G}_{n_{0,1}}[\frac{a_{0,1}}{n_{0,1}}, \tilde{p}]
\tilde{G}_{n_{1,2}}[\frac{a_{1,2}}{n_{1,2}}, \tilde{p}]
\tilde{G}_{n_{0,1}+n_{1,2}}[\tilde{p}, \tilde{p}']
\tilde{\theta}_{n_{0,1}+n_{1,2}}[\tilde{p}'].
\end{eqnarray}

\begin{figure}[htbp]
\begin{center}
\includegraphics[height=3cm, keepaspectratio, clip]{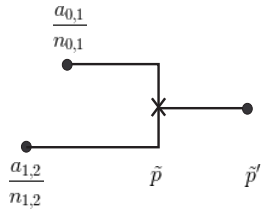}\end{center}
\caption{three point interaction}
\label{diagram0}
\end{figure}

When external states $\tilde{\theta}_{n}[\frac{m}{n}]$ propagate from $\frac{m}{n}$ to $\tilde{p}$, we can parametrize $\tilde{p}=\frac{m+v}{n}$. If two external states $\tilde{\theta}_{n_{0,1}}[\frac{m_{0,1}}{n_{0,1}}]$ and $\tilde{\theta}_{n_{1,2}}[\frac{m_{1,2}}{n_{1,2}}]$ propagate to the same point $\tilde{p}=\frac{m_{0,1}+v_{0,1}}{n_{0,1}}=\frac{m_{1,2}+v_{1,2}}{n_{1,2}}$ and interact locally, Lemma \ref{locallemma} leads to
\begin{equation}
\tilde{\theta}_{n_{0,1}}[\frac{m_{0,1}+v_{0,1}}{n_{0,1}}]
\tilde{\theta}_{n_{1,2}}[\frac{m_{1,2}+v_{1,2}}{n_{1,2}}] 
=
\tilde{\theta}_{n_{0,1}+n_{1,2}}[\frac{m_{0,1}+m_{1,2}+v_{0,1}+v_{1,2}}{n_{0,1}+n_{1,2}}],
\end{equation}
because $\tilde{p}=\frac{m_{0,1}+m_{1,2}+v_{0,1}+v_{1,2}}{n_{0,1}+n_{1,2}}$.  That is, $n$, $m$, and $v$ are preserved. 
Then, we define a canonical form of $\tilde{\theta}$, including internal states, as $\tilde{\theta}_{n}[\frac{m+v}{n}]$, where $n, m \in \bold{Z}$ and $v \in \bold{R}$. As a result, $m$ represent a kind of preserved numbers of string states. $\Delta \tilde{p}=\frac{\Delta v}{n}=\frac{v'-v}{n}$ represents how long the state propagates in an expression $\tilde{G}_{n}[\frac{m+v}{n}, \frac{m+v'}{n}]$.

We define the direction of the propagator $\tilde{G}_{n}[\frac{m+v}{n}, \frac{m+v'}{n}]$ as the same as the direction of $v$ and $v'$ ($v$ and $v'$ should have the same direction.). Because incoming states propagate from $\frac{m}{n}$ to $\frac{m+v}{n}$, $\Delta \tilde{p} =\frac{m+v}{n}-\frac{m}{n}=\frac{v}{n}.$ Coincidence of the signs of $\Delta \tilde{p}$ and $v$ implies that $n>0$ or $\Delta \tilde{p}=0$. That is, incoming states with $n<0$ cannot propagate. Similarly, outgoing states with $n>0$ cannot propagate. 

Strings between (i-1)-th and i-th D-branes can interact only with strings between (i-2)-th and (i-1)-th D-branes and strings between i-th and (i+1)-th D-branes. Therefore, we need to demand that the ordering of incoming states is non-commutative in the Feynman diagrams. 

We consider the moduli space $\mathcal{F}(\tilde{p}_{0,d}; \tilde{p}_{0,1}, \tilde{p}_{1,2}, \cdots, \tilde{p}_{d-1,d})$  of the Feynman diagrams that satisfy the above conditions,
for incoming states 
$\tilde{\theta}_{n_{0,1}}[\tilde{p}_{0,1}]$,
$\tilde{\theta}_{n_{1,2}}[\tilde{p}_{1,2}]$,
$\cdots$,
$\tilde{\theta}_{n_{d-1,d}}[\tilde{p}_{d-1,d}]$,
where
$\tilde{p}_{i-1,i}= \frac{\tilde{m}_{i-1,i}}{n_{i-1,i}}=\frac{m_{i-1,i}}{n_{i-1,i}}-\alpha_{i-1,i}$ 
$(\alpha_{i-1,i} \in \bold{Z}$, $i=1, \cdots, d)$,
and outgoing states 
$\tilde{\theta}_{n_{0,d}}[\tilde{p}_{0,d}]$,
where 
$\tilde{p}_{0,d}= \frac{\tilde{m}_{0,d}}{n_{0,d}}=\frac{\sum_{i=1}^d\tilde{m}_{i-1,i}}{n_{0,d}}$. We also consider the zero- and one-dimensional subspaces of the moduli space: $\mathcal{F}_0(\tilde{p}_{0,d}; \tilde{p}_{0,1}, \tilde{p}_{1,2}, \cdots, \tilde{p}_{d-1,d})$ and
$\mathcal{F}_1(\tilde{p}_{0,d}; \tilde{p}_{0,1}, \tilde{p}_{1,2}, \cdots, \tilde{p}_{d-1,d})$, respectively. Then, we obtain the following theorem. 
\begin{theorem}[correct Feynman diagram]
If $\mathcal{F}_0(\tilde{p}_{0,d}; \tilde{p}_{0,1}, \tilde{p}_{1,2}, \cdots, \tilde{p}_{d-1,d}) \neq \emptyset$, 

$\mathcal{F}_0(\tilde{p}_{0,d}; \tilde{p}_{0,1}, \tilde{p}_{1,2}, \cdots, \tilde{p}_{d-1,d})=\{\tilde{\delta}\}$, that is, $\mathcal{F}_0(\tilde{p}_{0,d}; \tilde{p}_{0,1}, \tilde{p}_{1,2}, \cdots, \tilde{p}_{d-1,d})$ consists of only one element $\tilde{\delta}$. Then, $\tilde{\delta} \in \mathcal{F}_0(\tilde{p}_{0,d}; \tilde{p}_{0,1}, \tilde{p}_{1,2}, \cdots, \tilde{p}_{d-1,d})$ determines a correlation function $<\tilde{p}_{0,d}; \tilde{p}_{0,1}, \tilde{p}_{1,2}, \cdots, \tilde{p}_{d-1,d} >_{\tilde{\delta}}$, which satisfies
\begin{equation}
\sum_{\alpha_{1,2}, \cdots, \alpha_{d-1, d}  \in \bold{Z}}<\tilde{p}_{0,d}; \tilde{p}_{0,1}, \tilde{p}_{1,2}, \cdots, \tilde{p}_{d-1,d} >_{\tilde{\delta}}
\theta_{n_{0, d}}[\tilde{p}_{0,d}]=
\theta_{n_{0,1}}[p_{0,1}]
\theta_{n_{1,2}}[p_{1,2}] \cdots
\theta_{n_{d-1,d}}[p_{d-1,d}]. 
\nonumber
\end{equation}
\label{correct}
\end{theorem}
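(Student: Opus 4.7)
The plan is to identify the unique element of $\mathcal{F}_0$ by reading off a planar rooted binary tree from the external data, showing that all internal vertex positions are forced by the local product of Lemma \ref{locallemma} together with the propagator direction rules. I then compute the resulting correlation function vertex by vertex using the Definition of the product of $\tilde\theta$, and sum over the $\alpha$'s to reduce to Theorem \ref{producttheorem}.

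First I would identify each Feynman diagram in $\mathcal{F}$ with a planar rooted trivalent tree whose $d$ ordered leaves are labelled by the incoming states $\tilde\theta_{n_{i-1,i}}[\tilde p_{i-1,i}]$ and whose root is labelled by the outgoing $\tilde\theta_{n_{0,d}}[\tilde p_{0,d}]$. Each internal vertex realises $\tilde\theta_{n_1}[\tilde p]\tilde\theta_{n_2}[\tilde p]=\tilde\theta_{n_1+n_2}[\tilde p]$, so both incoming propagators must terminate at the same position $\tilde p$; at the root, since all $n_{i-1,i}>0$ force $n_{0,d}>0$, the outgoing leg cannot propagate and the root vertex is pinned at $\tilde p_{0,d}$. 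Working inward, the common meeting position at an internal vertex whose subtree has incoming data $\{(n_j,\tilde p_j)\}$ is forced to be the $n$-weighted mean $(\sum_j n_j\tilde p_j)/(\sum_j n_j)$ in order for the state emerging from the vertex to agree with the position demanded from above. The planarity (non-commutative ordering of the incoming states) together with the sign-of-propagation rules along every edge eliminates the remaining tree-shape ambiguity, and the zero-dimensional condition fixes the last continuous parameter, giving $|\mathcal{F}_0|\le 1$; non-emptiness then upgrades this to $\mathcal{F}_0=\{\tilde\delta\}$.

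With $\tilde\delta$ in hand, the correlation function $<\tilde p_{0,d};\tilde p_{0,1},\ldots,\tilde p_{d-1,d}>_{\tilde\delta}$ is the product over internal vertices of the exponential factors prescribed by the Definition of the product of $\tilde\theta$: each vertex combining $(n_1,a_1/n_1)$ and $(n_2,a_2/n_2)$ contributes $\exp(\pi i\tau(n_2 a_1-n_1 a_2)^2/(n_1 n_2(n_1+n_2)))$. Telescoping these factors along $\tilde\delta$ is structurally the same calculation performed in the proof of Theorem \ref{producttheorem}, so the full correlator equals $\exp(2\pi i\tau S_d(\tilde p_{0,1},\ldots,\tilde p_{d-1,d}))$. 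Substituting this into the left-hand side of the asserted identity and summing over $\alpha_{1,2},\ldots,\alpha_{d-1,d}\in\bold{Z}$ reproduces the sum on the right-hand side of Theorem \ref{producttheorem} (under the identification $\tilde p_{0,d}=(\tilde a_{0,1}+\cdots+\tilde a_{d-1,d})/n_{0,d}$), and that theorem collapses the sum to $\theta_{n_{0,1}}[p_{0,1}]\cdots\theta_{n_{d-1,d}}[p_{d-1,d}]$.

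The hard part will be justifying the uniqueness statement $|\mathcal{F}_0|\le 1$ cleanly. Naively, a planar trivalent tree on $d$ ordered incoming legs carries $d-1$ meeting positions, so one must verify that the directionality conditions (coincidence of the signs of $\Delta\tilde p$ and $v$ along every edge) combined with the iterated weighted-mean constraint are restrictive enough that only one combinatorial diagram survives in the zero-dimensional stratum. I expect this to follow by induction on $d$: the $d=2$ case is immediate from Lemma \ref{locallemma} because the meeting point must coincide with $\tilde p_{0,2}$, and in the inductive step one checks that there is a unique planar way to attach the next incoming leaf consistently with the forced positions on the subtree below it, which is exactly the iterative product used in Theorem \ref{producttheorem}.
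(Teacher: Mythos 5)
Your computation of the correlator itself is essentially the right mechanism and agrees with the paper: once the unique diagram is in hand, the propagators are stripped off one at a time via $\tilde{G}_n[\tilde{p}_0,\tilde{p}]\,\tilde{\theta}_n[\tilde{p}]=\tilde{\theta}_n[\tilde{p}_0]$ together with Lemma \ref{locallemma}, and Lemma \ref{coefficientlemma} turns the sum over $\alpha_{1,2},\dots,\alpha_{d-1,d}$ of products of $\tilde{\theta}$'s back into $\theta_{n_{0,1}}[p_{0,1}]\cdots\theta_{n_{d-1,d}}[p_{d-1,d}]$, which is exactly how the paper closes the argument (and is the same telescoping as in Theorem \ref{producttheorem}).

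The genuine gap is in the classification of the diagrams, which is the actual content of the theorem, and your structural picture of $\tilde{\delta}$ is wrong in two ways. First, you assume ``all $n_{i-1,i}>0$''; this is not a hypothesis of the theorem and is generically false -- the signs are \emph{outputs} of the classification, not inputs. The mechanism is: an incoming state with $n<0$ cannot propagate and an outgoing leg with $n>0$ cannot propagate; since $\dim\mathcal{F}_0=0$ there are no interaction vertices at generic (moduli-carrying) positions, so every interaction must occur at the location of a non-propagating external state. Together with the nearest-neighbour ordering this forces $n_{0,1}>0$, forces each $\tilde{p}_{i-1,i}$ in the interior of the chain to have $n_{i-1,i}<0$ and to \emph{be} the position of an interaction vertex, and leaves exactly two caterpillar-shaped diagrams according to the sign of $n_{0,d}$ (Figs.~\ref{diagram5}, \ref{diagram6}). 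Second, and relatedly, your claim that each internal vertex sits at the $n$-weighted mean of the data entering its subtree is false for the intermediate vertices: the weighted mean $\frac{1}{n_{0,d}}\sum_i n_{i-1,i}\tilde{p}_{i-1,i}$ is the position only of the final outgoing vertex $\tilde{p}_{0,d}$ (forced by $v$-conservation and $v=0$ at external vertices), while the intermediate interactions happen at $\tilde{p}_{1,2},\tilde{p}_{2,3},\dots$ themselves. Since you also explicitly defer the uniqueness argument (``the hard part will be justifying $|\mathcal{F}_0|\le 1$''), the step that actually needs proving -- that the direction rules plus zero-dimensionality admit only the two chain diagrams -- is missing, and the inductive sketch you propose starts from an incorrect description of where the vertices can sit.
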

\begin{proof}
First, we classify the Feynman diagrams. There is no internal vertex because dim($\mathcal{F}_0$)=0. Then, the vertices are external only: $\tilde{p}_{0,d}$, $\tilde{p}_{0,1}$, $\tilde{p}_{1,2}$, $\cdots$, $\tilde{p}_{d-1,d}$. $\tilde{p}_{0,1}$ is at the end of the diagrams because of the nearest neighbour interaction (Fig. \ref{diagram1}). $n_{0,1}>0$ because the string needs to propagate. Because of the three point interactions, $\tilde{p}_{i-1,i}$ not at the end of the diagrams cannot propagate, namely $n_{i-1,i}<0$ (Fig. \ref{diagram2}). Let us consider the outgoing vertex $\tilde{p}_{0,d}$. When $n_{0,d}<0$, a vertex propagate to $\tilde{p}_{0,d}$ (Fig. \ref{diagram3}). When $n_{0,d}>0$, because no vertex propagates to $\tilde{p}_{0,d}$, the last interaction must be as in (Fig. \ref{diagram4}).  As a result, we obtain only (i) (Fig. \ref{diagram5}) when $n_{0,d}<0$, and (ii) (Fig. \ref{diagram6}) when $n_{0,d}>0$. Therefore, we have shown that $\mathcal{F}_0(\tilde{p}_{0,d}; \tilde{p}_{0,1}, \tilde{p}_{1,2}, \cdots, \tilde{p}_{d-1,d})$ consists of only one element $\tilde{\delta}$ if $\mathcal{F}_0(\tilde{p}_{0,d}; \tilde{p}_{0,1}, \tilde{p}_{1,2}, \cdots, \tilde{p}_{d-1,d}) \neq \emptyset$.

\begin{figure}[htbp]
\begin{center}
\includegraphics[height=1cm, keepaspectratio, clip]{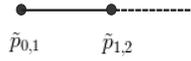}\end{center}
\caption{$\tilde{p}_{0,1}$}
\label{diagram1}
\end{figure}

\begin{figure}[htbp]
\begin{center}
\includegraphics[height=1cm, keepaspectratio, clip]{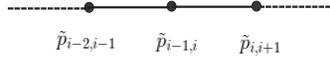}\end{center}
\caption{the external vertices that cannot propagate}
\label{diagram2}
\end{figure}

\begin{figure}[htbp]
\begin{center}
\includegraphics[height=1cm, keepaspectratio, clip]{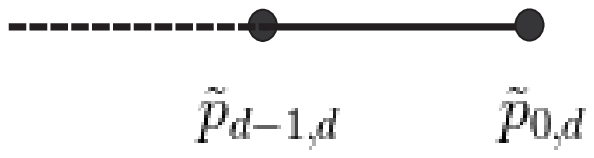}\end{center}
\caption{end point when $n_{0,d}<0$}
\label{diagram3}
\end{figure}

\begin{figure}[htbp]
\begin{center}
\includegraphics[height=3cm, keepaspectratio, clip]{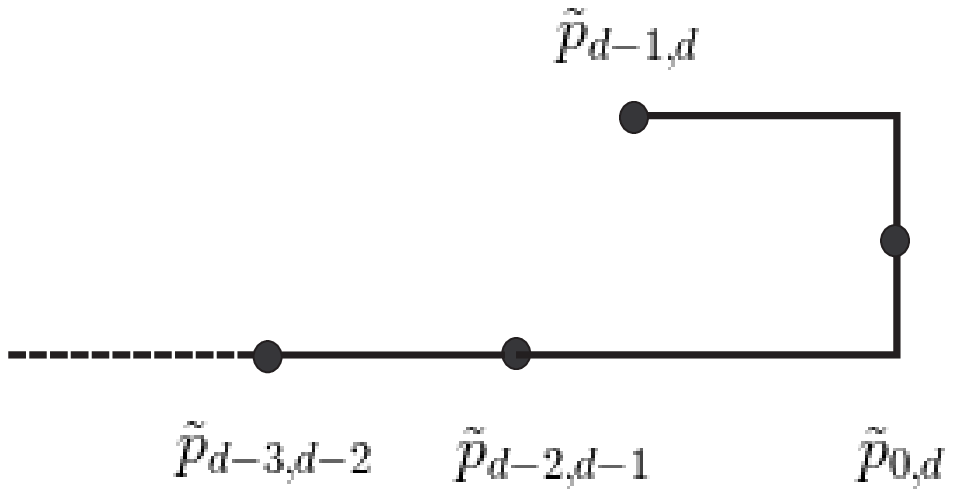}\end{center}
\caption{end point when $n_{0,d}>0$}
\label{diagram4}
\end{figure}

\begin{figure}[htbp]
\begin{center}
\includegraphics[height=1cm, keepaspectratio, clip]{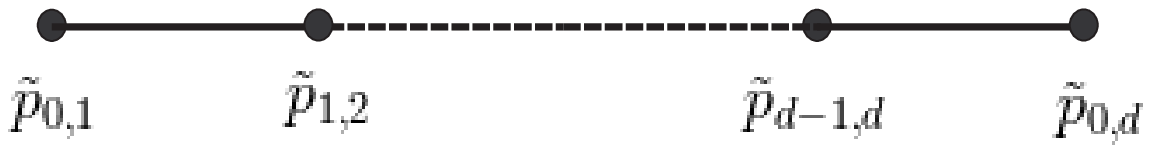}\end{center}
\caption{When $n_{0,d}<0$.}
\label{diagram5}
\end{figure}

\begin{figure}[htbp]
\begin{center}
\includegraphics[height=3cm, keepaspectratio, clip]{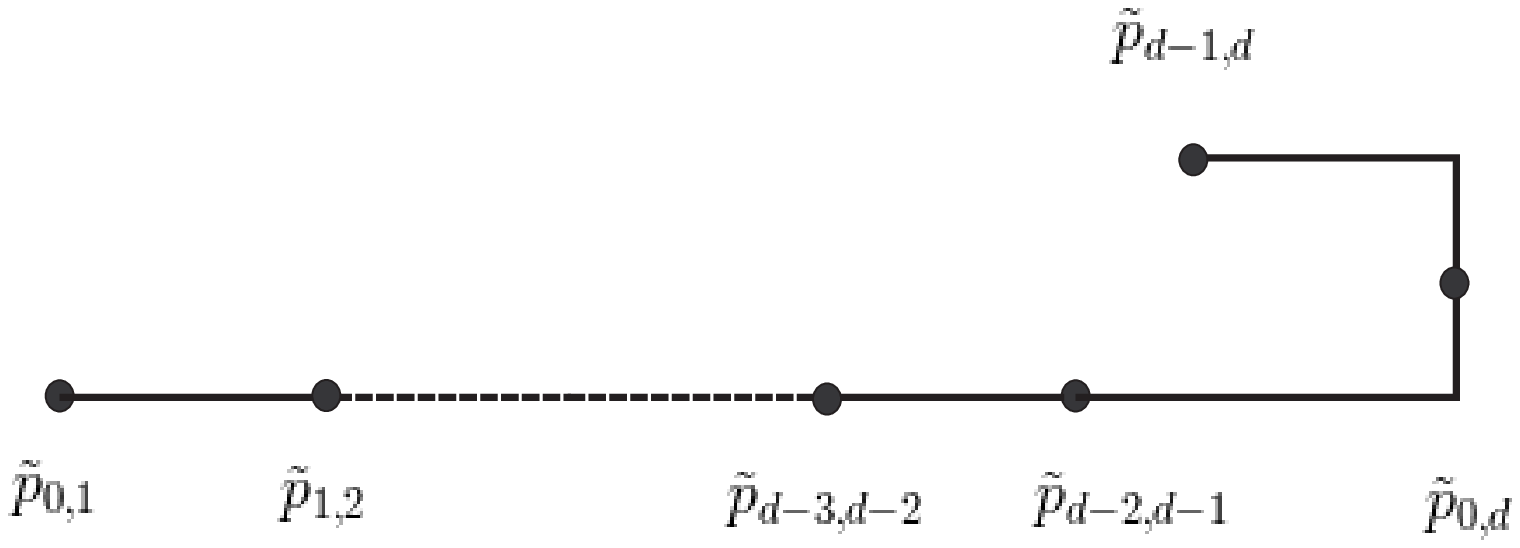}\end{center}
\caption{When $n_{0,d}>0$.}
\label{diagram6}
\end{figure}

Next, we calculate the correlation function determined by $\tilde{\delta}$.

(i) When $n_{0,d}<0$, from the diagram in Fig. \ref{diagram5},  
\begin{eqnarray}
&&\sum_{\alpha_{1,2}, \cdots, \alpha_{d-1, d}   \in \bold{Z}}<\tilde{p}_{0,d}; \tilde{p}_{0,1}, \tilde{p}_{1,2}, \cdots, \tilde{p}_{d-1,d} >_{\tilde{\delta}}
\tilde{\theta}_{n_{0,1}+ \cdots +n_{d-1, d}}[\tilde{p}_{0,d}]
\nonumber \\
&=&
\sum_{\alpha_{1,2}, \cdots, \alpha_{d-1, d}   \in \bold{Z}}
\tilde{G}_{n_{0,1}}[\frac{\tilde{m}_{0,1}}{n_{0,1}}, \frac{\tilde{m}_{1,2}}{n_{1,2}}]
\tilde{G}_{n_{0,1}+n_{1,2}}[\frac{\tilde{m}_{1,2}}{n_{1,2}}, \frac{\tilde{m}_{2,3}}{n_{2,3}}] 
\cdots
\tilde{G}_{n_{0,1}+n_{1,2}+\cdots+n_{d-2, d-1}}[\frac{\tilde{m}_{d-2, d-1}}{n_{d-2, d-1}}, \frac{\tilde{m}_{d-1,d}}{n_{d-1,d}}]
\nonumber \\
&& \qquad \qquad
\tilde{G}_{n_{0,1}+n_{1,2}+\cdots+n_{d-1,d}}[\frac{\tilde{m}_{d-1,d}}{n_{d-1,d}}, \frac{\tilde{m}_{0,d}}{n_{0,d}}]\tilde{\theta}_{n_{0,1}+n_{1,2}+\cdots+n_{d-1,d}}[\frac{\tilde{m}_{0,d}}{n_{0,d}}]
\nonumber \\
&=&
\sum_{\alpha_{1,2}, \cdots, \alpha_{d-1, d}   \in \bold{Z}}
\tilde{G}_{n_{0,1}}[\frac{\tilde{m}_{0,1}}{n_{0,1}}, \frac{\tilde{m}_{1,2}}{n_{1,2}}]
\tilde{G}_{n_{0,1}+n_{1,2}}[\frac{\tilde{m}_{1,2}}{n_{1,2}}, \frac{\tilde{m}_{2,3}}{n_{2,3}}] 
\cdots
\tilde{G}_{n_{0,1}+n_{1,2}+\cdots+n_{d-2, d-1}}[\frac{\tilde{m}_{d-2, d-1}}{n_{d-2, d-1}}, \frac{\tilde{m}_{d-1,d}}{n_{d-1,d}}]
\nonumber \\
&& \qquad \qquad
\tilde{\theta}_{n_{0,1}+n_{1,2}+\cdots+n_{d-1,d}}[\frac{\tilde{m}_{d-1,d}}{n_{d-1,d}}]
\nonumber \\
&=&
\sum_{\alpha_{1,2}, \cdots, \alpha_{d-1, d}   \in \bold{Z}}
\tilde{G}_{n_{0,1}}[\frac{\tilde{m}_{0,1}}{n_{0,1}}, \frac{\tilde{m}_{1,2}}{n_{1,2}}]
\tilde{G}_{n_{0,1}+n_{1,2}}[\frac{\tilde{m}_{1,2}}{n_{1,2}}, \frac{\tilde{m}_{2,3}}{n_{2,3}}] 
\cdots
\tilde{G}_{n_{0,1}+n_{1,2}+\cdots+n_{d-2, d-1}}[\frac{\tilde{m}_{d-2, d-1}}{n_{d-2, d-1}}, \frac{\tilde{m}_{d-1,d}}{n_{d-1,d}}]
\nonumber \\
&& \qquad \qquad
\tilde{\theta}_{n_{0,1}+n_{1,2}+\cdots+n_{d-2, d-1}}[\frac{\tilde{m}_{d-1,d}}{n_{d-1,d}}]
\tilde{\theta}_{n_{d-1,d}}[\frac{\tilde{m}_{d-1,d}}{n_{d-1,d}}]
\nonumber \\
&\cdots& 
\nonumber \\
&=&
\sum_{\alpha_{1,2}, \cdots, \alpha_{d-1, d}   \in \bold{Z}}
\tilde{\theta}_{n_{0,1}}[\frac{\tilde{m}_{0,1}}{n_{0,1}}]
\tilde{\theta}_{n_{1,2}}[\frac{\tilde{m}_{1,2}}{n_{1,2}}]
\cdots
\tilde{\theta}_{n_{d-1,d}}[\frac{\tilde{m}_{d-1,d}}{n_{d-1,d}}].
\end{eqnarray}
From Lemma \ref{coefficientlemma}, 
\begin{equation}
\sum_{\alpha_{1,2}, \cdots, \alpha_{d-1, d}  \in \bold{Z}}<\tilde{p}_{0,d}; \tilde{p}_{0,1}, \tilde{p}_{1,2}, \cdots, \tilde{p}_{d-1,d} >_{\tilde{\delta}}
\theta_{n_{0,1}+ \cdots +n_{d-1, d}}[\tilde{p}_{0,d}]=\theta_{n_{0,1}}[p_{0,1}]
\theta_{n_{1,2}}[p_{1,2}] \cdots
\theta_{n_{d-1,d}}[p_{d-1,d}]. 
\end{equation}

(ii) When $n_{0,d}>0$, from the diagram in Fig. \ref{diagram6},  
\begin{eqnarray}
&&\sum_{\alpha_{1,2}, \cdots, \alpha_{d-1, d}   \in \bold{Z}}<\tilde{p}_{0,d}; \tilde{p}_{0,1}, \tilde{p}_{1,2}, \cdots, \tilde{p}_{d-1,d} >_{\tilde{\delta}}
\tilde{\theta}_{n_{0,1}+ \cdots +n_{d-1, d}}[\tilde{p}_{0,d}]
\nonumber \\
&=&
\sum_{\alpha_{1,2}, \cdots, \alpha_{d-1, d}   \in \bold{Z}}
\tilde{G}_{n_{0,1}}[\frac{\tilde{m}_{0,1}}{n_{0,1}}, \frac{\tilde{m}_{1,2}}{n_{1,2}}]
\tilde{G}_{n_{0,1}+n_{1,2}}[\frac{\tilde{m}_{1,2}}{n_{1,2}}, \frac{\tilde{m}_{2,3}}{n_{2,3}}] 
\cdots
\nonumber \\
&& \qquad \qquad \qquad
\tilde{G}_{n_{0,1}+n_{1,2}+\cdots+n_{d-2,d-1}}[\frac{\tilde{m}_{d-2,d-1}}{n_{d-2,d-1}}, \frac{\tilde{m}_{0,d}}{n_{0,d}}]\tilde{\theta}_{n_{0,1}+n_{1,2}+\cdots+n_{d-2,d-1}}[\frac{\tilde{m}_{0,d}}{n_{0,d}}]
\nonumber \\
&& \qquad \qquad \qquad
\tilde{G}_{n_{d-1,d}}[\frac{\tilde{m}_{d-1,d}}{n_{d-1,d}}, \frac{\tilde{m}_{0,d}}{n_{0,d}}]\tilde{\theta}_{n_{d-1,d}}[\frac{\tilde{m}_{0,d}}{n_{0,d}}]
\nonumber \\
&=&
\sum_{\alpha_{1,2}, \cdots, \alpha_{d-1, d}   \in \bold{Z}}
\tilde{\theta}_{n_{0,1}}[\frac{\tilde{m}_{0,1}}{n_{0,1}}]
\tilde{\theta}_{n_{1,2}}[\frac{\tilde{m}_{1,2}}{n_{1,2}}]
\cdots
\tilde{\theta}_{n_{d-2,d-1}}[\frac{\tilde{m}_{d-2,d-1}}{n_{d-2,d-1}}]
\tilde{\theta}_{n_{d-1,d}}[\frac{\tilde{m}_{d-1,d}}{n_{d-1,d}}].
\end{eqnarray}
From Lemma \ref{coefficientlemma},
\begin{equation}
\sum_{\alpha_{1,2}, \cdots, \alpha_{d-1, d}  \in \bold{Z}}<\tilde{p}_{0,d}; \tilde{p}_{0,1}, \tilde{p}_{1,2}, \cdots, \tilde{p}_{d-1,d} >_{\tilde{\delta}}
\theta_{n_{0,1}+ \cdots +n_{d-1, d}}[\tilde{p}_{0,d}]=\theta_{n_{0,1}}[p_{0,1}]
\theta_{n_{1,2}}[p_{1,2}] \cdots
\theta_{n_{d-1,d}}[p_{d-1,d}]. 
\end{equation}
\end{proof}
By using these Feynman diagrams, we define $m_d$.
\begin{definition}[$m_d$]
\begin{eqnarray}
&&m_d(p_{0,1},p_{1,2}, \cdots, p_{d-1,d}) \nonumber \\
&:=&\sum_{\alpha_{1,2}, \cdots, \alpha_{d-1, d}  \in \bold{Z}}
\sum_{\tilde{\delta} \in \mathcal{F}_0(\tilde{p}_{0,d}; \tilde{p}_{0,1},\tilde{p}_{1,2}, \cdots, \tilde{p}_{d-1,d})} 
(-1)^{|\tilde{\delta}|}<\tilde{p}_{0,d}; \tilde{p}_{0,1}, \tilde{p}_{1,2}, \cdots, \tilde{p}_{d-1,d} >_{\tilde{\delta}}
\theta_{n_{0, d}}[\tilde{p}_{0,d}] \nonumber
\end{eqnarray}
\label{defmd}
\end{definition}
$|\tilde{\delta}| \in \bold{Z}$ should be determined so that $m_d$ satisfy the $A_{\infty}$ relation. We have given a general definition including more than one-dimensional case, whereas $\tilde{\delta}$ is unique in this one-dimensional situation. Although $m_d$ are essentially products of the theta functions as one can see in Theorem \ref{correct}, the naturally defined directions of the propagators and non-commutativity of the external vertices restrict the states of the external vertices. As a result, we obtain the following theorems.
\begin{theorem}
The degree of $m_d$ is 2-d.
\end{theorem}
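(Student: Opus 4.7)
The plan is to read off the degree of $m_d$ directly from the classification of $\mathcal{F}_0$ carried out in Theorem \ref{correct}. Recall the cohomological grading inherited from $H^{0,p}$: a state $\theta_n[p]$ has degree $0$ when $n>0$, since then it lies in $H^{0,0}(E_\tau, L(n))$, and degree $1$ when $n<0$, since via the isomorphism $\psi$ it represents an element of $H^{0,1}(E_\tau, L(n))$. By Definition \ref{defmd}, the output of $m_d$ is $\theta_{n_{0,d}}[\tilde{p}_{0,d}]$, so I must verify
\[
\deg(\theta_{n_{0,d}}[\tilde{p}_{0,d}]) \;-\; \sum_{i=1}^{d} \deg(\theta_{n_{i-1,i}}[\tilde{p}_{i-1,i}]) \;=\; 2-d.
\]

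First I would extract from the proof of Theorem \ref{correct} the sign patterns of $(n_{0,1}, \ldots, n_{d-1,d}, n_{0,d})$ compatible with $\mathcal{F}_0 \neq \emptyset$. The nearest-neighbour rule places $\tilde{p}_{0,1}$ at an endpoint, forcing $n_{0,1}>0$, while every internal external vertex $\tilde{p}_{i-1,i}$ with $2\le i\le d-1$ must satisfy $n_{i-1,i}<0$ since such a vertex cannot propagate. In case (i) ($n_{0,d}<0$) the input $\tilde{p}_{d-1,d}$ is also internal, so $n_{d-1,d}<0$; in case (ii) ($n_{0,d}>0$) Fig.~\ref{diagram4} forces $\tilde{p}_{d-1,d}$ to sit at the other endpoint with $n_{d-1,d}>0$.

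Next I would do the degree count in each case. In case (i) exactly one input (namely $i=1$) has degree $0$ and the remaining $d-1$ inputs have degree $1$, while the output has degree $1$; hence the degree shift equals $1-(d-1)=2-d$. In case (ii) exactly two inputs ($i=1$ and $i=d$) have degree $0$ and the remaining $d-2$ have degree $1$, while the output has degree $0$; hence the degree shift equals $0-(d-2)=2-d$. Since Theorem \ref{correct} guarantees these are the only two configurations contributing to $m_d$, and the sum over $\alpha_{i-1,i}$ in Definition \ref{defmd} does not alter any $n_{i-1,i}$, the desired identity holds uniformly.

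The main obstacle is not the combinatorics, which reduces to the case analysis above, but the conceptual check that the $\psi$-isomorphism is compatible with the grading: namely, that after passing from $\theta^*_n[\tfrac{m}{n}]d\bar{z} \in \Omega^{0,1}$ to $\theta_n[\tfrac{m}{n}] \in \Omega^{0,0}$ one must still assign degree $1$ to the state (rather than the degree $0$ of its new differential-form representative). Once this convention is fixed, the degree formula is forced by the rigid sign pattern imposed by the propagation rules.
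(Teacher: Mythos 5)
Your proposal is correct and follows essentially the same route as the paper: it assigns degree $0$ to states with $n>0$ and degree $1$ to states with $n<0$, invokes the two-case classification of the diagrams from Theorem \ref{correct}, and balances degrees in Definition \ref{defmd} to get $1-(d-1)=2-d$ and $0-(d-2)=2-d$ in the respective cases. Your extra remark that the $\psi$-isomorphism must preserve the cohomological degree (rather than the degree of the new form representative) is a convention the paper leaves implicit, but it does not change the argument.
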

\begin{proof}
Because $p$ in $H^{0,p}(E_{\tau}, L(n_{i-1, i}, u_{i-1, i}))$ represent the degrees of the open string states, the states with $n_{i-1, i}>0$ and $n_{i-1, i}<0$ have degrees 0 and 1, respectively. In the proof of Theorem \ref{correct}, the Feynman diagrams are classified into two kinds (Fig. \ref{diagram5}, \ref{diagram6}). In the diagrams in Fig. \ref{diagram5}, $n_{0,d}<0$, $n_{0,1}>0$, and $n_{i-1,i}<0$ ($i=2, \cdots d$). If we compare the degrees of the both sides of the formula in Definition \ref{defmd}, we obtain $0+(d-1)+\mbox{(the degree of $m_d$)}=1$. Thus, the degree of $m_d$ is $2-d$. Because $n_{0,d}>0$, $n_{0,1}>0$, $n_{i-1,i}<0$ ($i=2, \cdots d-1$), and $n_{d-1,d}>0$ in the diagrams in Fig. \ref{diagram6}, $0+(d-2)+0+\mbox{(the degree of $m_d$)}=0$. Thus, the degree of $m_d$ is $2-d$.
\end{proof}

\begin{theorem}[$A_{\infty}$ relation]
\begin{eqnarray}
&&\sum_{1 \le p \le d, 0 \le q \le d-p}
(-1)^{deg(p_{0,1})+\cdots+deg(p_{q-1,q})-q} \nonumber \\
&& m_{d-p+1} (p_{0,1}, \cdots, p_{q-1,q}, m_p(p_{q, q+1}, \cdots, p_{p+q-1, p+q}), p_{p+q, p+q+1}, \cdots, p_{d-1,d})=0,
\nonumber
\end{eqnarray}
where $deg(p_{i-1,i})$ $(i=1, \cdots, d)$ represent the degrees of the string states.  
\end{theorem}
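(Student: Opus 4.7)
The plan is to prove the $A_{\infty}$ relation as the boundary identity for the one-dimensional moduli space $\mathcal{F}_1(\tilde{p}_{0,d};\tilde{p}_{0,1},\ldots,\tilde{p}_{d-1,d})$. A generic element of $\mathcal{F}_1$ is a trivalent tree with one internal edge whose length is the single modulus; its two boundary strata arise when this edge either collapses to a point or stretches to separate the tree. By the structure of the Feynman diagrams established in the proof of Theorem \ref{correct} (nearest-neighbour interactions, the fact that $n<0$ external states cannot propagate, and the preservation of $n$, $m$, $v$ at every local vertex), each collapsing of an internal edge decomposes the tree into two subtrees of the allowed types, one with $p$ consecutive external incoming vertices $\tilde{p}_{q,q+1},\ldots,\tilde{p}_{p+q-1,p+q}$ producing an intermediate output, and a complementary tree in which this output occupies the $(q+1)$-th incoming slot among the remaining $d-p+1$ external inputs.

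First, I would enumerate the boundary $\partial\mathcal{F}_1$. By the same case analysis used for $\mathcal{F}_0$ (splitting on whether the total $n$ of a subtree is positive or negative to determine the direction of the propagators and which vertex is an output), one checks that every pair $(\tilde{\delta}_p,\tilde{\delta}_{d-p+1})\in \mathcal{F}_0\times \mathcal{F}_0$ satisfying the concatenation conditions occurs exactly once as a boundary stratum, and conversely every boundary point is of this form. By the same propagator-and-local-product computation as in Theorem \ref{correct}, the correlation function carried by such a boundary element factorises as the product of correlation functions of $\tilde{\delta}_p$ and $\tilde{\delta}_{d-p+1}$, which is precisely the theta-function coefficient of $m_{d-p+1}(p_{0,1},\ldots,p_{q-1,q},m_p(p_{q,q+1},\ldots,p_{p+q-1,p+q}),p_{p+q,p+q+1},\ldots,p_{d-1,d})$ after summing over the lattice shifts $\alpha_{i-1,i}$.

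Second, I would invoke the cobordism-type identity $\sum_{b\in\partial\mathcal{F}_1}\varepsilon(b)=0$: since $\mathcal{F}_1$ is one-dimensional and each connected component is an interval whose two endpoints are counted with opposite induced orientations, the signed count of boundary points vanishes. Choosing the grading $|\tilde{\delta}|\in\mathbf{Z}$ in Definition \ref{defmd} so that the induced orientation on a boundary stratum matches the Koszul sign $(-1)^{\deg(p_{0,1})+\cdots+\deg(p_{q-1,q})-q}$ (this is possible because the degree data at the external vertices and at the intermediate output determine the parity of the collapse, as verified by the degree computation in the previous theorem), the boundary sum translates exactly into the stated $A_{\infty}$ relation.

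The main obstacle is the sign bookkeeping. Getting $|\tilde{\delta}|$ right requires checking that a single consistent assignment on $\mathcal{F}_0$ induces, after the edge-collapse, the Koszul signs in every case distinguished by whether $n_{0,d}>0$ or $n_{0,d}<0$ and by where the intermediate vertex sits among the two subtree configurations of Figs. \ref{diagram5} and \ref{diagram6}. Once this compatibility is verified by examining the two diagrammatic types and comparing with the degrees $\deg(p_{i-1,i})\in\{0,1\}$ as in the previous theorem, the remaining combinatorics and the theta-product computation (which reduces to iterated applications of Lemmas \ref{coefficientlemma} and \ref{locallemma}) are routine.
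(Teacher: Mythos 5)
Your proposal follows essentially the same route as the paper: both identify each composed term $m_{d-p+1}(\ldots,m_p(\ldots),\ldots)$ with a boundary point $\tilde{\delta}_1\otimes\tilde{\delta}_2$ of the one-dimensional moduli space $\bar{\mathcal{F}}_1(\tilde{p}_{0,d};\tilde{p}_{0,1},\ldots,\tilde{p}_{d-1,d})$ (with the intermediate vertex $\tilde{r}$ as the modulus), and then conclude by the vanishing of the signed count of boundary points of a compact one-dimensional space, with the signs $|\tilde{\delta}|$ chosen to reproduce the Koszul signs. The sign bookkeeping you flag as the main obstacle is likewise left at the level of ``choose $|\tilde{\delta}|$ appropriately'' in the paper, which defers the cancellation mechanism to the A-model literature.
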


\begin{proof}
\begin{eqnarray}
&&m_{d-p+1}(p_{0,1}, \cdots, p_{q-1,q}, m_p(p_{q, q+1}, \cdots, p_{p+q-1, p+q}), p_{p+q, p+q+1}, \cdots, p_{d-1,d}) 
\nonumber \\
&=&
\sum_{\alpha_{q+1, q+2},\cdots, \alpha_{p+q-1, p+q}  \in \bold{Z}}
\sum_{\tilde{\delta}_1 \in \mathcal{F}_0(\tilde{r}; \tilde{p}_{q, q+1},  \cdots, \tilde{p}_{p+q-1, p+q})} 
(-1)^{|\tilde{\delta_1}|}
<\tilde{r}; \tilde{p}_{q, q+1},  \cdots, \tilde{p}_{p+q-1, p+q} >_{\tilde{\delta}_1}
\nonumber \\
&& \qquad \qquad \qquad \qquad \qquad \qquad \qquad \qquad  
m_{d-p+1}(p_{0,1}, \cdots, p_{q-1,q}, \tilde{r},  
p_{p+q, p+q+1}, \cdots, p_{d-1,d}) \nonumber \\
&=&
\sum_{\alpha_{0,1}, \cdots, \alpha_{q-1, q}, \alpha_{q+1, q+2}, \cdots, \alpha_{d-1, d}  \in \bold{Z}}
\sum_{\tilde{\delta}_1 \in \mathcal{F}_0(\tilde{r}; \tilde{p}_{q, q+1},  \cdots, \tilde{p}_{p+q-1, p+q})} 
\sum_{\tilde{\delta}_2 \in \mathcal{F}_0(\tilde{p}_{0,d}; \tilde{p}_{0,1},  \cdots, \tilde{p}_{q-1,q}, \tilde{r},  
\tilde{p}_{p+q, p+q+1}, \cdots, \tilde{p}_{d-1,d})}
\nonumber \\
&& 
(-1)^{|\tilde{\delta_1}|+|\tilde{\delta_2}|}
<\tilde{r}; \tilde{p}_{q, q+1},  \cdots, \tilde{p}_{p+q-1, p+q} >_{\tilde{\delta}_1}
<\tilde{p}_{0,d}; \tilde{p}_{0,1},  \cdots, \tilde{p}_{q-1,q}, \tilde{r}, 
\tilde{p}_{p+q, p+q+1}, \cdots, \tilde{p}_{d-1,d} >_{\tilde{\delta}_2}
\nonumber \\
&& 
\theta_{n_{0,1}+ \cdots +n_{d-1, d}}[\tilde{p}_{0,d}]. 
\label{lhsofA}
\end{eqnarray}
When $\tilde{r}$ take arbitrary values, the diagrams in Fig. \ref{diagram7} are elements of $\bar{\mathcal{F}}_1(\tilde{p}_{0,d}; \tilde{p}_{0,1}, \tilde{p}_{1,2}, \cdots, \tilde{p}_{d-1,d})$, which is the one-dimensional subspace of $\bar{\mathcal{F}}(\tilde{p}_{0,d}; \tilde{p}_{0,1}, \tilde{p}_{1,2}, \cdots, \tilde{p}_{d-1,d})$ where $\bar{\mathcal{F}}(\tilde{p}_{0,d}; \tilde{p}_{0,1}, \tilde{p}_{1,2}, \cdots, \tilde{p}_{d-1,d})$ is a closure of $\mathcal{F}(\tilde{p}_{0,d}; \tilde{p}_{0,1}, \tilde{p}_{1,2}, \cdots, \tilde{p}_{d-1,d})$. 
Especially, $\tilde{\delta}_1 \otimes \tilde{\delta}_2$ are  the diagrams in Fig. \ref{diagram7} with $\tilde{r}=\frac{n_{q,q+1}\tilde{p}_{q,q+1}+\cdots+n_{q+p-1,q+p}\tilde{p}_{q+p-1,q+p}}{n_{q,q+1}+\cdots+n_{q+p-1,q+p}}$, which are elements of $\partial \bar{\mathcal{F}}_1(\tilde{p}_{0,d}; \tilde{p}_{0,1}, \tilde{p}_{1,2}, \cdots, \tilde{p}_{d-1,d})$, where
$\partial \bar{\mathcal{F}}_1 (\tilde{p}_{0,d}; \tilde{p}_{0,1}, \tilde{p}_{1,2}, \cdots, \tilde{p}_{d-1,d})$ is a boundary of $\bar{\mathcal{F}}_1(\tilde{p}_{0,d}; \tilde{p}_{0,1}, \tilde{p}_{1,2}, \cdots, \tilde{p}_{d-1,d})$.  By choosing $|\tilde{\delta}|$ appropriately, (\ref{lhsofA}) becomes 0 because the contributions from the boundary of the one-dimensional space cancel with each other as in the same mechanism in the A-model \cite{FukayaCategoryBook, phD, Progress, BraneMirror}.  
\end{proof}

\begin{figure}[htbp]
\begin{center}
\includegraphics[height=9cm, keepaspectratio, clip]{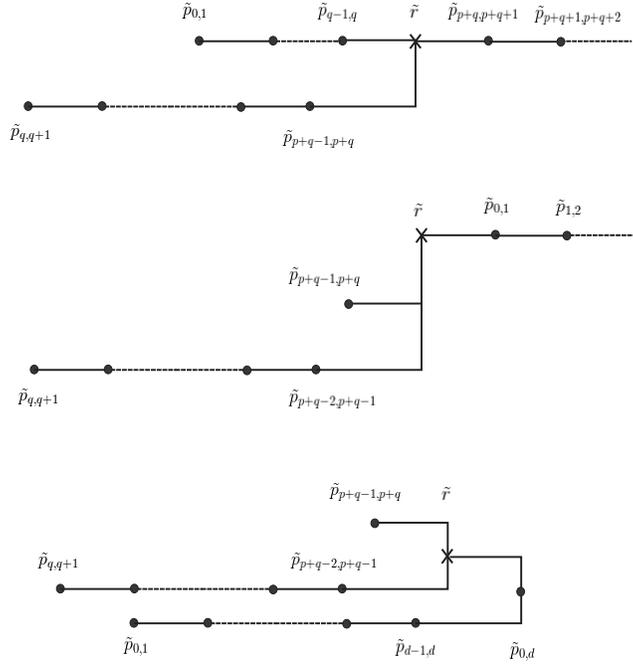}\end{center}
\caption{$\tilde{\delta}_1 \otimes \tilde{\delta}_2 \in \partial \bar{\mathcal{F}}_1$}
\label{diagram7}
\end{figure}

Here we compare $m_d$ in the B-model and A-model.
\begin{theorem}[mirror symmetry]
$m_d$ in the B-model in Definition \ref{defmd} coincides with $m_d$ in the A-model by identifying $\theta_n[\tilde{p}]$ with $[\tilde{p}]$\footnote{
This identification is a quasi-isomorphism \cite{phD, Progress, BraneMirror}.
}.
\end{theorem}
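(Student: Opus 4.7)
The plan is to exhibit an explicit bijection between the indexing data for the B-model $m_d$ (elements of $\mathcal{F}_0$, together with the sums over $\alpha_{i-1,i}\in \mathbf{Z}$) and the indexing data for the A-model $m_d$ (holomorphic polygons with boundary on the Lagrangians), and then to check that the exponential weights and signs agree term by term. Recall that in the one-dimensional A-model on the symplectic torus, the object $L(n_i,u_i)$ is mirror to a straight Lagrangian $L_i$ of slope $n_i$, and a generator $[\tilde p_{i-1,i}]$ corresponds to an intersection $L_{i-1}\cap L_i$ lifted to the universal cover $\mathbf{R}^2$; the A-model $m_d$ is the signed sum over immersed holomorphic polygons with vertices $\tilde p_{0,1},\dots,\tilde p_{d-1,d},\tilde p_{0,d}$, weighted by $\exp(2\pi i\tau\,\mathrm{Area}(\delta))$.

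First I would make the dictionary explicit. The sum $\sum_{\alpha_{i-1,i}\in\mathbf{Z}}$ in Definition \ref{defmd} is exactly the sum over lifts of each intersection point to $\mathbf{R}$, i.e.\ the sum over all lifts $\tilde p_{i-1,i}=\tfrac{m_{i-1,i}}{n_{i-1,i}}-\alpha_{i-1,i}$ of the fixed projection $p_{i-1,i}\in\mathbf{R}/\mathbf{Z}$. For each such choice of lifts, Theorem \ref{correct} says that $\mathcal F_0$ is either empty or a singleton $\{\tilde\delta\}$, and the two cases $n_{0,d}<0$ and $n_{0,d}>0$ in Figures \ref{diagram5}, \ref{diagram6} correspond precisely to the two shapes of A-model polygon: a $(d+1)$-gon bounded cyclically by $L_0,\dots,L_d$ when $n_{0,d}<0$, and the ``broken'' configuration when $n_{0,d}>0$ (where the outgoing intersection sits on a segment that runs oppositely). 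In both cases the nearest-neighbour constraint on the Feynman diagram is exactly the boundary-ordering constraint on the polygon, and non-emptiness of $\mathcal F_0$ translates into the existence of a genuine convex lift of such a polygon to $\mathbf{R}^2$.

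Next I would identify the weights. By Theorem \ref{producttheorem} the correlation function $\langle\tilde p_{0,d};\tilde p_{0,1},\dots,\tilde p_{d-1,d}\rangle_{\tilde\delta}$ equals $\exp(2\pi i\tau\,S_d(\tilde p_{0,1},\dots,\tilde p_{d-1,d}))$, so the key computation is to show
\[
S_d(\tilde p_{0,1},\dots,\tilde p_{d-1,d}) \;=\; \mathrm{Area}(\delta),
\]
where $\delta$ is the A-model polygon attached to $\tilde\delta$. This is a purely geometric calculation: parametrise the polygon by its consecutive vertices (intersections of lines of slopes $n_{i-1},n_i$), expand the shoelace formula, and identify it with the quadratic form in $S_d$. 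The inductive step $S_{k+1}-S_k=\tfrac{(n_{k,k+1}(\tilde a_{0,1}+\cdots+\tilde a_{k-1,k})-(n_{0,1}+\cdots+n_{k-1,k})\tilde a_{k,k+1})^2}{2(n_{0,1}+\cdots+n_{k-1,k})n_{k,k+1}(n_{0,1}+\cdots+n_{k,k+1})}$ already exhibited in the proof of Theorem \ref{producttheorem} can be read as the area of the triangle glued on at the $(k+1)$-th step, which matches the inductive decomposition of the A-model polygon into triangles.

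Finally I would fix the signs $(-1)^{|\tilde\delta|}$. Since $m_d$ satisfies the $A_\infty$ relation and since the A-model $m_d$ also satisfies it, both sign choices are forced up to a global sign by the same combinatorial cancellation along $\partial\bar{\mathcal F}_1$ (Figure \ref{diagram7}); matching them on $d=2$ (where the B-model $m_2$ reduces to the classical theta-function product that Polishchuk--Zaslow identify with the A-model product on the torus) propagates the agreement to all $d$ by induction on $d$ using the $A_\infty$ relation. The main obstacle is the explicit area-versus-$S_d$ computation, together with a careful bookkeeping of which sum-over-lifts corresponds to which holomorphic polygon: one must check that as $\alpha_{i-1,i}$ ranges over $\mathbf Z$ the non-empty $\mathcal F_0$'s enumerate without repetition all A-model polygons projecting to the given boundary conditions, and that the two topological types in Figures \ref{diagram5}, \ref{diagram6} exhaust the A-model moduli space. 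Once these two facts and the sign matching are in place, the identification $\theta_n[\tilde p]\leftrightarrow[\tilde p]$ turns the formula of Definition \ref{defmd} into the standard A-model $m_d$.
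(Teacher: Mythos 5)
Your strategy is genuinely different from the paper's. The paper does not compare the Feynman diagrams directly with holomorphic polygons; instead it compares them with the \emph{tropical Morse trees} of Abouzaid and Gross--Siebert, by listing the defining conditions (0)--(5) of a tropical Morse tree (conservation of $n$ and of the affine displacement vector $v$ at vertices, $v=0$ at external vertices, $\Delta v=n\,\Delta\tilde y$ along edges with matching directions, prescribed external coordinates) and checking that each one holds for the Feynman diagrams essentially by construction. The passage from tropical Morse trees to pseudo holomorphic disks --- that is, the enumeration of polygons and the identification of $S_d$ with the symplectic area --- is then quoted from the cited literature rather than proved. Your route short-circuits the tropical intermediary and goes straight to the polygons; that is a legitimate and arguably more self-contained plan, but it means you take on exactly the work the paper outsources.

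Two points in your proposal are therefore load-bearing and not actually carried out. First, the equality $S_d(\tilde p_{0,1},\dots,\tilde p_{d-1,d})=\mathrm{Area}(\delta)$ together with the claim that the nonempty $\mathcal F_0$'s, as the $\alpha_{i-1,i}$ range over $\mathbf Z$, enumerate the holomorphic polygons without repetition or omission: you correctly flag these as ``the main obstacle,'' but on your route they constitute the entire mathematical content of the theorem, and the shoelace/inductive-triangle computation is only announced, not performed. The inductive increment $S_{k+1}-S_k$ from the proof of Theorem \ref{producttheorem} is the right starting point, but you still must verify that the piece glued on at step $k+1$ of the polygon decomposition has precisely that area, including the configuration of Fig.~\ref{diagram6} where the outgoing edge runs oppositely and the region is not a convex $(d+1)$-gon in the naive sense. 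Second, the sign argument is not sound as stated: the $A_\infty$ relations do not determine $m_d$ from $m_2,\dots,m_{d-1}$ (two $A_\infty$ structures can agree through level $d-1$ and differ at level $d$ by a Hochschild cocycle), so ``both sides satisfy $A_\infty$ and agree at $d=2$'' does not force the signs to propagate by induction. To be fair, the paper is equally vague here --- $|\tilde\delta|$ is only required to be ``chosen appropriately'' --- but a complete proof along your lines needs an explicit sign rule on both sides and a term-by-term comparison.
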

\begin{proof}
The formula in Definition \ref{defmd} can be regarded as $m_d$ in the A-model by replacing $\theta_n[\tilde{p}]$ with $[\tilde{p}]$. The following is the definition of the formula in Definition \ref{defmd} in the A-model \cite{phD, Progress, BraneMirror}. In the A-model, the i-th D-brane (Lagrangian submanifold) $\tilde{L}_i$ $(i=1, \cdots, d)$ is represented by a line with slope $-n_i$ on the universal cover of the symplectic torus $TB$. The intersection of $\tilde{L}_{i-1}$ and $\tilde{L}_i$ is represented by $\tilde{y}=\tilde{p}_{i-1, i}$ in $(\tilde{y}, \tilde{x})$ coordinates that is the universal cover of $(y,x)$ coordinates of $TB$ $(0 \le y,x <1)$. $[\tilde{p}_{i-1, i}]$ represents the string state with $n_{i-1, i}$ between the D-branes $\tilde{L}_{i-1}$ and $\tilde{L}_i$. $[\tilde{p}_{i-1, i}]$ corresponds to $\theta_{n_{i-1,i}}[\tilde{p}_{i-1,i}]$ in the B-model. The intersection of $L_{d}$ and $L_0$ is determined as $\tilde{p}_{0,d}=\frac{1}{n_{0,d}}\sum_{i=1}^d n_{i-1,i} \tilde{p}_{i-1,i}$. $[\tilde{p}_{0,d}]$ is a basis of the $m_d(p_{0,1},p_{1,2}, \cdots, p_{d-1,d})$, which corresponds to $\theta_{n_{0,d}}[\tilde{p}_{0,d}]$ in the B-model. The area of the convex surrounded by $L_0, \cdots, L_d$ is determined as $S_d(\tilde{p}_{0,1}, \cdots, \tilde{p}_{d-1, d})
=\frac{1}{2}(\sum_{i=1}^d n_{i-1,i} \tilde{p}_{i-1, i}(\tilde{p}_{i-1, i}-1) 
-n_{0,d}\tilde{p}_{0, d}(\tilde{p}_{0, d}-1))$. $\exp(2 \pi i \tau S_d(\tilde{p}_{0,1}, \cdots, \tilde{p}_{d-1, d}))$ is the coefficient of the $m_d(p_{0,1},p_{1,2}, \cdots, p_{d-1,d})$, which is equal to the coefficient of the $m_d(p_{0,1},p_{1,2}, \cdots, p_{d-1,d})$ in the B-model as one can see in Theorem \ref{producttheorem}, \ref{correct}.

We are going to show that the Feynman diagrams, which determine the $A_{\infty}$ structure of the B-model, coincide with the tropical Morse trees, which determine the $A_{\infty}$ structure of the A-model. That is, both the $A_{\infty}$ structures coincide. The tropical Morse trees $\tilde{\phi}$ are continuous maps that satisfy the following 
conditions,\footnote{
The condition (6) in \cite{BraneMirror} determines not the tropical Morse trees but the length of the metric ribbon trees from the tropical Morse trees.}
 from the metric ribbon trees $S$ to the universal cover $B$ ($\ni \tilde{y}$) of $\bold{R}/\bold{Z}$ ($\ni y$) \cite{phD, Progress, BraneMirror}.
\begin{itemize}
\item[(0)] We attach $n_{i-1, i}$ to the i-th external incoming edge and attach the sum of all numbers labelling the edges coming into a vertex to the edge coming out of the vertex. That is, the numbers $n$ are preserved at vertices. We define an affine displacement vector $v$ on each edge. 
\item[(1)] The coordinates of the external incoming and outgoing vertices are $\tilde{y}=\tilde{p}_{i-1,i}$ and $\tilde{y}=\tilde{p}_{0,d}$, respectively.
\item[(2)] All the edges and vertices of the metric ribbon trees map to the edges and vertices in the universal cover, respectively.
\item[(3)] $v=0$ at the external vertices. 
\item[(4)] $\Delta v=n \Delta \tilde{y}$ on the edges. The directions of $v$ and the edges coincide.
\item[(5)] $v$ are preserved at the vertices. 
\end{itemize}
 The conditions $(0), \cdots, (5)$ coincide with the properties of the Feynman diagrams in the B-model: (1) is satisfied in the B-model by Theorem \ref{correct}. (2) is automatically satisfied in the B-model. (0), (3), and (5) follow from the definition of the canonical form of the $\tilde{\theta}$. (4) coincides with the condition that  $\Delta v=n \Delta \tilde{p}$ and directions of the propagators are the same as those of $v$. 
\end{proof}
 Because the Feynman diagrams in the B-model coincide with the tropical Morse trees in the A-model, $\mathcal{F}(p_{0,d}; p_{0,1},p_{1,2}, \cdots, p_{d-1,d})
=
S^{trop}(p_{0,d}; p_{0,1},p_{1,2}, \cdots, p_{d-1,d})$, which is the moduli space of the tropical Morse trees. It is shown in \cite{phD, Progress, BraneMirror} that $S^{trop}(p_{0,d}; p_{0,1},p_{1,2},\cdots, p_{d-1,d})$ is bijective\footnote{The bijection map is given as follows \cite{phD, Progress, BraneMirror}: If the tropical Morse trees are decomposed to the edges, the corresponding pseudo holomorphic curves are also decomposed, and thus, the moduli spaces are decomposed as $\mathcal{S}^{trop}= \bigcup_{e} \mathcal{S}_e^{trop}$ and $\mathcal{M}/PSL(2, \bold{R}) = \bigcup_{e} \mathcal{M}_e/PSL(2, \bold{R})$. If the decomposed tropical Morse trees are denoted as $\mathcal{S}_e^{trop} \ni \tilde{\phi}: S_e \to B$ defined by $s \mapsto \tilde{\phi}(s)$ $(0 \le s \le 1)$, the decomposed pseudo holomorphic curves are denoted as $\mathcal{M}_e/PSL(2, \bold{R}) \ni R_{\tilde{\phi}}: D_e \to TB$  defined by $(s, t) \mapsto (\tilde{\phi}(s), -n_e \tilde{\phi}(s) -t \frac{d}{ds}\tilde{\phi}(s) )$ $(0 \le t \le 1)$, where $\frac{d}{ds}\tilde{\phi}(s)=v(s)$. Then, the map between these moduli spaces is given by $F: \mathcal{S}_e^{trop} \to \mathcal{M}_e/PSL(2, \bold{R})$ defined by $\tilde{\phi} \mapsto R_{\tilde{\phi}}$, that is, $\tilde{\phi}(s) \mapsto (\tilde{\phi}(s), -n_e \tilde{\phi}(s) -t \frac{d}{ds}\tilde{\phi}(s) )$.} to the moduli space $\mathcal{M}(p_{0,d}; p_{0,1},p_{1,2}, \cdots, p_{d-1,d})/PSL(2, \bold{R})$ of the pseudo holomorphic maps from a disk with $d+1$ marked points to the symplectic torus $TB$ in the A-model in one dimension.
If a topology, for example Gromov topology \cite{Gromov}, is defined on $\mathcal{M}(p_{0,d}; p_{0,1},p_{1,2}, \cdots, p_{d-1,d})/PSL(2, \bold{R})$, a topology is uniquely induced on $S^{trop}(p_{0,d}; p_{0,1},p_{1,2},$ $ \cdots, p_{d-1,d})$ by the bijective map\footnote{For example, if $\mathcal{M}_e/PSL(2, \bold{R})$ is equipped with the topology of the functional space, equivalently, compact-open topology, the same topology is induced to $\mathcal{S}_e^{trop}$, and $\mathcal{S}_e^{trop} \cong \mathcal{M}_e/PSL(2, \bold{R})$. By patching them together, $\mathcal{S}^{trop} \cong \mathcal{M}/PSL(2, \bold{R})$.}. 
As a result, we obtain
\begin{theorem}[topological space in homological mirror symmetry]

$\mathcal{F}(p_{0,d}; p_{0,1},p_{1,2}, \cdots, p_{d-1,d})$ in the B-model is homeomorphic to $\mathcal{M}(p_{0,d}; p_{0,1},p_{1,2}, \cdots, p_{d-1,d})/PSL(2, \bold{R})$ in the A-model.\end{theorem}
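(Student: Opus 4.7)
The plan is to chain together three ingredients already developed in the excerpt. First I would invoke the preceding mirror symmetry theorem: its proof establishes that the defining data of a Feynman diagram in $\mathcal{F}(p_{0,d}; p_{0,1},\ldots,p_{d-1,d})$ (external vertex positions $\tilde{p}_{i-1,i}$, attached integers $n_{i-1,i}$ preserved at vertices, and affine displacement vectors $v$ with $\Delta v = n\,\Delta \tilde{p}$) are precisely the conditions (0)--(5) defining a tropical Morse tree. This yields not merely a set-theoretic equality but a natural identification $\mathcal{F}(p_{0,d}; p_{0,1},\ldots,p_{d-1,d}) = S^{trop}(p_{0,d}; p_{0,1},\ldots,p_{d-1,d})$ that respects edge-by-edge structure.

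Second, I would import the bijection $F: S^{trop} \to \mathcal{M}/PSL(2,\bold{R})$ established in \cite{phD, Progress, BraneMirror}, whose explicit form is recalled in the preceding footnote: decompose a tropical Morse tree into edges $\mathcal{S}_e^{trop}$ and send $\tilde{\phi}\in \mathcal{S}_e^{trop}$ to the pseudo holomorphic strip $R_{\tilde{\phi}}: (s,t)\mapsto (\tilde{\phi}(s), -n_e \tilde{\phi}(s) - t\,\tfrac{d}{ds}\tilde{\phi}(s))$, then patch edges at their common vertices. Composing with the first identification gives a bijection of sets $\mathcal{F} \to \mathcal{M}/PSL(2,\bold{R})$.

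Third, I would equip $\mathcal{M}/PSL(2,\bold{R})$ with the Gromov topology (equivalently, the compact-open topology on each stratum $\mathcal{M}_e/PSL(2,\bold{R})$, glued across edges), and simply pull this topology back to $\mathcal{F}$ via the bijection. With respect to this induced topology, the composed map is a homeomorphism tautologically, so the theorem follows.

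The main technical point, and where I expect the real work to lie, is verifying that the stratum-wise homeomorphisms $\mathcal{S}_e^{trop} \cong \mathcal{M}_e/PSL(2,\bold{R})$ assemble consistently along the interior vertices of the trees, so that the global map is genuinely continuous and proper rather than only a stratified bijection. Concretely, one must check that a sequence of Feynman diagrams converging in the induced topology gives rise to a sequence of pseudo holomorphic strips converging in Gromov topology, including the degenerate limits where an edge length collapses (producing a diagram with one fewer internal vertex) or becomes unbounded. Since the map $F$ is given by an explicit affine formula in the universal cover and the stratification of both $\mathcal{F}$ and $\mathcal{M}/PSL(2,\bold{R})$ is indexed by the same combinatorial type of tree, this compatibility reduces to a straightforward check on each boundary stratum, and the homeomorphism statement follows.
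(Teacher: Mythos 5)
Your proposal follows the paper's argument essentially verbatim: identify $\mathcal{F}(p_{0,d}; p_{0,1},\cdots,p_{d-1,d})$ with $S^{trop}(p_{0,d}; p_{0,1},\cdots,p_{d-1,d})$ via the preceding mirror symmetry theorem, invoke the edge-wise bijection $F$ onto $\mathcal{M}(p_{0,d}; p_{0,1},\cdots,p_{d-1,d})/PSL(2,\bold{R})$ from the cited references, and transport the Gromov (compact-open) topology back along this bijection so that the composite is a homeomorphism by construction. The stratum-compatibility check you flag as ``the real work'' is likewise left implicit in the paper, whose footnote simply patches the edge-wise identifications together, so your route does not genuinely depart from the paper's.
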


\vspace{1cm}

\section{Conclusion and Discussion}
\setcounter{equation}{0}
In this paper, we have defined and explicitly constructed $m_d$ that form $A_{\infty}$-category of the B-model on an elliptic curve. The way to define $m_d$ is summarized as follows. The morphisms in the DG-category of this model are represented by theta functions with characteristics. From the products of the two theta functions, we can derive Feynman rules in the space of the characteristics, in other words, the configuration space of the morphisms. $m_d$ are defined by the products of $d$ theta functions with the characteristics in the case that the moduli space of the Feynman diagrams is restricted to zero-dimensional. This restriction is necessary for $m_d$ to satisfy the $A_{\infty}$ relations. This $m_d$ algebra coincides with the $m_d$ algebra in the Fukaya category of the A-model on the corresponding symplectic torus. Thus, the $A_{\infty}$-category formed by these $m_d$ is equivalent to the Fukaya category as an $A_{\infty}$-category. We have also shown that the moduli space of the Feynman diagrams in the B-model is homeomorphic to the moduli space of the pseudo holomorphic curves in the A-model. 

 The way to define $m_d$ is naturally generalized in the case of the B-model on $2n$-dimensional complex manifolds: We derive Feynman rules in the configuration space of the morphisms from the products of the two morphisms in the DG-category of the B-model. Then we define $m_d$ by the products of $d$ morphisms with the configurations in the case that the moduli space of the Feynman diagrams is restricted to zero-dimensional. We conjecture that these $m_d$ satisfy the $A_{\infty}$ relations and form an $A_{\infty}$-category, which is equivalent to the Fukaya category of the A-model on the corresponding $2n$-dimensional symplectic manifold, as an $A_{\infty}$-category. We also conjecture that the moduli space of the Feynman diagrams in the B-model on the $2n$-dimensional complex manifolds is homeomorphic to the moduli space of the pseudo holomorphic curves in the A-model on the corresponding $2n$-dimensional symplectic manifold. These moduli spaces determine the $A_{\infty}$ structure of the both models.


\vspace*{0cm}

\end{document}